\newtheorem{lemma}{Lemma}
\newtheorem{definition}{Definition}
\newtheorem{theorem}{Theorem}
\begin{document}

\title{Joint Activity-Delay Detection and Channel Estimation for Asynchronous Massive Random Access: A Free Probability Theory Approach}

\author{Xinyu~Bian,~\IEEEmembership{Graduate Student Member,~IEEE,}
Yuyi~Mao,~\IEEEmembership{Member,~IEEE,}
and~Jun~Zhang,~\IEEEmembership{Fellow,~IEEE}
\thanks{X. Bian and J. Zhang are with the Department of Electronic and Computer Engineering, the Hong Kong University of Science and Technology, Hong Kong (E-mail: xinyu.bian@connect.ust.hk, eejzhang@ust.hk). Y. Mao is with the Department of Electrical and Electronic Engineering, the Hong Kong Polytechnic University, Hong Kong (E-mail: yuyi-eie.mao@polyu.edu.hk). This work was supported by the General Research Fund (Project No. 15207220) from the Hong Kong Research Grants Council. \emph{(Corresponding author: Yuyi Mao)}

Part of this work was presented at the 2023 IEEE Global Communications Conference (GLOBECOM) \cite{xbian2023conference}.
}
\thanks{}
\thanks{}}

\markboth{}%
{Shell \MakeLowercase{\textit{et al.}}: Bare Demo of IEEEtran.cls for IEEE Communications Society Journals}

\maketitle

\begin{abstract}
Grant-free random access (RA) has been recognized as a promising solution to support massive connectivity due to the removal of the uplink grant request procedures. While most endeavours assume perfect synchronization among users and the base station, this paper investigates asynchronous grant-free massive RA, and develop efficient algorithms for joint user activity detection, synchronization delay detection, and channel estimation. Considering the sparsity on user activity, we formulate a sparse signal recovery problem and propose to utilize the framework of orthogonal approximate message passing (OAMP) to deal with the non-independent and identically distributed (i.i.d.) Gaussian pilot matrices caused by the synchronization delays. In particular, an OAMP-based algorithm is developed to fully harness the common sparsity among received pilot signals from multiple base station antennas. To reduce the computational complexity, we further propose a free probability AMP (FPAMP)-based algorithm, which exploits the rectangular free cumulants to make the cost-effective AMP framework compatible to general pilot matrices. Simulation results demonstrate that the two proposed algorithms outperform various baselines, and the FPAMP-based algorithm reduces $40\%$ of the computations while maintaining comparable detection/estimation accuracy with the OAMP-based algorithm.
\end{abstract}
\begin{IEEEkeywords}
Grant-free massive random access, activity detection, delay detection, channel estimation, asynchronous connectivity, free probability theory, approximate message passing (AMP).
\end{IEEEkeywords}

\IEEEpeerreviewmaketitle

\section{Introduction \label{sectioni}}
As applications supported by massive machine-type communications (mMTC) technologies rapidly evolve \cite{cbo2016}, an enormous challenge of supporting reliable communication for massive devices with limited bandwidth resources arises \cite{yshi2020}. This challenge should be tackled together with other unique features of mMTC, such as the sporadic data traffic pattern, via novel random access (RA) schemes \cite{mt2014}. Conventionally, each user (e.g. a Internet-of-Things device) requests a grant from the base station (BS) for its transmission by initiating a complex handshaking process. Such grant-based RA schemes shall result in significant signaling overhead and access latency \cite{psch2017,xchen2021}, especially in the target scenarios with massive potential users. In addition, due to the limited preamble sequences for uplink grant request, grant-based RA suffers from potential collisions when two or more users pick the same sequence \cite{ebj2017}. As a result, grant-free RA, where each active user directly transmits data without waiting for access permissions from the BS, is more appealing for mMTC \cite{jchoi2022}.

Since the user activity is unknown beforehand at the BS with grant-free RA, each user is assigned with a fixed and unique pilot sequence to enable user activity detection and channel estimation, which is critical to the downstream data detection and decoding \cite{ywu2020}. However, attributed to the large number of users and the limited resources for pilot transmission, only non-orthogonal pilot sequences can be adopted, which unavoidably compromises the accuracy of user activity detection and channel estimation. Therefore, many works have been carried out to develop advanced user activity detection and channel estimation algorithms for grant-free RA systems.

\subsection{Related Works and Motivations}
User activity detection and channel estimation have been the hot spots in the research area of grant-free RA. In \cite{zchen2019} and \cite{sha2018}, the set of active users are determined by solving a maximum likelihood (ML) estimation problem via coordinate-wise descent according to the sample covariance-matrix of received pilot signal. Joint activity detection and channel estimation (JADCE) was investigated in \cite{zchen2018,mke2020,xbian2023, xbian2023tcom, ycui2021} by adopting the family of approximate message passing (AMP) algorithms. In particular, by formulating JADCE as a single measurement vector (SMV) and multiple measurement vector (MMV) compressive sensing problem with single- and multi-antenna BS, respectively, a variety of AMP algorithms with a minimum mean square error (MMSE) denoiser were developed in \cite{zchen2018}. To improve the user activity detection and channel estimation accuracy, wireless channel sparsity from the spatial and angular domain were further utilized by a general MMV-AMP algorithm in \cite{mke2020}. Besides, the common sparsity pattern in the received pilot and data signal was exploited via the bilinear generalized AMP (BiG-AMP) algorithm along with soft data decoding information in \cite{xbian2023}, while a correlated AMP algorithm were developed to take advantages of user activity correlation among transmission blocks in \cite{xbian2023tcom}. Moreover, low-complexity JADCE methods were proposed based on deep learning in \cite{yqiang2020, ycui2021}.

Nevertheless, the above studies assume that users and the BS are perfectly synchronized, which is impractical in grant-free RA systems. This is because without coordination of the BS, different users may send pilot sequences on-demand, which can be randomly delayed by some unknown symbol periods \cite{hongyuan2008}. Therefore, asynchronous grant-free massive RA has attracted growing interest. Specifically, the joint activity and delay detection problem in asynchronous grant-free RA was formulated as a group least absolute shrinkage and selection operator (LASSO) problem in \cite{lliu2021}, which was solved by a block coordinate descent algorithm. However, this algorithm does not leverage the common sparsity among the received pilot signals of multiple BS antennas. Accordingly, a sample covariance matrix-based approach was proposed in \cite{zwang2022} to achieve higher accuracy of activity and delay detection at increased computational complexity. Note that the methods in \cite{lliu2021} and \cite{zwang2022} require separate channel estimation procedures following user activity and delay detection. To overcome this limitation, joint activity-delay detection and channel estimation was investigated in \cite{yguo2023} and \cite{mqiu2022}, where a low-complexity orthogonal matching pursuit (OMP)-based and alternating direction method of multipliers (ADMM)-based methods were developed, respectively. However, it can be anticipated that both methods are far from optimal. On one hand, the OMP-based algorithm is not able to utilize prior knowledge of the channel coefficients. On the other hand, the ADMM-based approach does not explicitly incorporate the synchronization delays in optimizations. Therefore, \cite{wzhu2021} proposed a model-driven learned AMP (LAMP) network for asynchronous grant-free massive RA, which unrolled the AMP iterations as neural network layers. However, since entries of the effective pilot matrices in asynchronous grant-free massive RA systems are not independent and identically distributed (i.i.d.) Gaussian due to the delayed pilot symbols, such a basic AMP-based algorithm cannot achieve the best detection and estimation accuracy. It is thus of pressing needs to develop advanced receivers for asynchronous grant-free massive RA to enhance the performance of activity-delay detection and channel estimation.

\subsection{Contributions}
In this paper, we develop novel joint activity-delay detection and channel estimation algorithms for asynchronous grant-free massive RA systems, which enjoy premium performance and low computational complexity. Our main contributions are summarized below.

\begin{itemize}
    \item Given that the conventional AMP-based joint activity detection and channel estimation algorithm fails to handle pilot matrices with non-i.i.d. Gaussian entries caused by the synchronization delay, the framework of orthogonal AMP (OAMP) \cite{jma2017} is adopted to jointly estimate the user activity, synchronization delay, and channel coefficients for asynchronous grant-free massive RA systems. To boost the detection/estimation performance, the common sparsity among multiple BS antennas is also utilized to update the prior information in each iteration of the OAMP-based algorithm.
\end{itemize}

\begin{itemize}
    \item One drawback of the OAMP-based algorithm is the high computational complexity due to the use of a linear MMSE (LMMSE) estimator. In order to reduce the complexity, we resort to the free probability theory and develop a free probability AMP (FPAMP)-based algorithm for joint activity-delay detection and channel estimation. Unlike the OAMP-based algorithm, FPAMP estimates the moments of the eigenvalue distributions of the pilot matrix, and exploits its rectangular free cumulants to extend the low-complexity AMP framework for general pilot matrices, which avoids using the complex LMMSE estimator.
\end{itemize}

\begin{itemize}
    \item Simulation results show that the proposed OAMP-based and FPAMP-based algorithms significantly reduce the activity-delay detection and channel estimation errors compared with the baseline schemes. In particular, if the probability of false alarm and missed detection requirement are respectively $10^{-1}$ and $2\times 10^{-3}$, the two proposed algorithms are able to support $76\%$ more active users in comparsion with the conventional AMP-based algorithm. In addition, the FPAMP-based algorithm enjoys a similar complexity as the AMP-based algorithm, while maintaining almost the same detection/estimation performance as the OAMP-based algorithm.
\end{itemize}

\subsection{Organization}
The rest of this paper is organized as follows. We introduce the system model in Section \ref{sectionii}. In Section \ref{sectioniii}, we propose an OAMP-based algorithm for joint activity-delay detection and channel estimation. A low-complexity FPAMP-based algorithm is developed in Section \ref{sectioniv}. Simulation results are presented in Section \ref{sectionv} and conclusions are drawn in Section \ref{sectionvi}.

\subsection{Notations} 
We use lower-case letters, bold-face lower-case letters, bold-face upper-case letters, and math calligraphy letters to denote scalars, vectors, matrices, and sets, respectively. The transpose and conjugate transpose of matrix $\mathbf{M}$ are denoted as $\mathbf{M}^{\mathrm{T}}$ and $\mathbf{M}^{\mathrm{H}}$, respectively. Besides, we denote the complex Gaussian distribution with mean $\bm{\mu}$ and covariance matrix $\bm{\Sigma}$ as $\mathcal{C} \mathcal{N}(\bm{\mu}, \bm{\Sigma})$, and the probability density function (PDF) of a complex Gaussian variable $\bm{x}$ as $\mathcal{C} \mathcal{N}(\bm{x};\bm{\mu}, \bm{\Sigma})$. In addition, $\mathbf{I}_{L}$ denotes the $L$-by-$L$ identify matrix, $\delta_{0}$ denotes the Dirac delta function, “$\otimes$” stands for the Kronecker product, $\operatorname{tr}(\cdot)$ returns the trace of a matrix, and $\mathbb{E}[\cdot]$ and $\operatorname{Var}[\cdot]$ denote the statistical expectation and variance, respectively. Given vector $\mathbf{x}=(x_1,\cdots,x_L)^{\mathrm{T}}$, we use $\left\langle\mathbf{x}\right\rangle$ to denote its empirical average, i.e., $\frac{1}{L}\sum_{i=1}^{L}x_{i}$. We further use $[\mathbf{M}]_{i_1:i_2,j_1:j_2}$ to denote the submatrix obtained by extracting elements of matrix $\mathbf{M}$ from row $i_1$ to $i_2$ and column $j_1$ to $j_2$. If $i_1=i_2$ or $j_1=j_2$, the submatrix reduces to a vector and the second index is omitted. Moveover, $[f(\cdot)]_{t}$ denotes the $t$-th entry of the output of vector-valued function $f(\cdot)$.

\section{System Model \label{sectionii}}
We consider an uplink asynchronous grant-free massive RA system with $N$ single-antenna users and an $M$-antenna BS, where the set of users and BS antennas are denoted as $\mathcal{N}\triangleq \{1,\cdots,N\}$ and $\mathcal{M}\triangleq\{1,\cdots,M\}$, respectively. Users are assumed to have sporadic data traffic. Specifically, at each channel block, $K$ ($K\leq N$) of the $N$ users independently become active for uplink transmission with equal probability \cite{xbian2023}. Let $u_{n} \in \{0,1\}$ be the activity indicator of user $n$, where $u_{n}=1$ indicates that user $n$ is active and vice versa. Denote the set of active users as $\mathcal{K} \triangleq \left\{n \in \mathcal{N} | u_{n}=1 \right\}$. We assume quasi-static Rayleigh fading channels for simplicity, and denote the uplink channel vector from user $n$ to the BS as $\mathbf{f}_{n}\sim \mathcal{CN}(\mathbf{0},\beta_{n}\mathbf{I}_{M})$, where $\beta_{n}$ is the path loss of user $n$ known at the BS.

A grant-free RA scheme is adopted for uplink transmission, which divides each transmission block into two phases for transmitting pilot and data signals, respectively. In the first phase, $\bar{L}$ pilot symbols are transmitted for user activity detection and channel estimation at the BS. Since orthogonal pilot sequences are insufficient for massive potential users, a unique and non-orthogonal pilot sequence, denoted as $\sqrt{\bar{L}}\bar{\mathbf{p}}_{n}$, where $\bar{\mathbf{p}}_{n}\triangleq \left[\bar{p}_{n, 1}, \cdots, \bar{p}_{n, \bar{L}}\right]^{\mathrm{T}}$ and $\bar{p}_{n, l} \sim \mathcal{C} \mathcal{N}\left(0, 1/\bar{L}\right)$, is assigned to each user \cite{zchen2018}. However, attributed to the synchronization delay, signal transmissions of different active users may start at different time instant. Without loss of generality, we assume each user transmits asynchronously in the frame level, but synchronously in the symbol level, i.e., the pilot sequence is transmitted with a delay of some unknown symbol periods. We use $t_{n}$ to denote the unknown delay for user $n$, which is assumed to be an integer uniformly distributed in set $\{0,\cdots,T\}$ with $T$ denoting the maximum synchronization delay. To avoid interference between pilot and data signals, we insert a guard interval spanning $T$ symbol periods between them following \cite{wzhu2021}, as shown in Fig. \ref{system}. Therefore, the expanded pilot sequence of user $n$ with delay $t_{n}$, denoted as $\tilde{\mathbf{p}}_{n,t_{n}}$, can be expressed as $\tilde{\mathbf{p}}_{n,t_{n}}\triangleq\left[\mathbf{0}_{t_{n}}^{\mathrm{T}},\bar{\mathbf{p}}_{n}^{\mathrm{T}},\mathbf{0}_{T-t_{n}}^{\mathrm{T}}\right]^{\mathrm{T}}$, which is a sequence with length $L=\bar{L}+T$ obtained by padding $t_{n}$ and $T-t_{n}$ zeros before and after $\bar{\mathbf{p}}_{n}$, respectively. Correspondingly, by considering all possible synchronization delays, we define $\mathbf{P}_{n}\triangleq \left[\tilde{\mathbf{p}}_{n,0},\cdots,\tilde{\mathbf{p}}_{n,T}\right] \in \mathbb{C}^{L\times (T+1)}$ as the expanded pilot matrix of user $n$. 
\begin{figure}[t]
\centering
\includegraphics[width=3.4in]{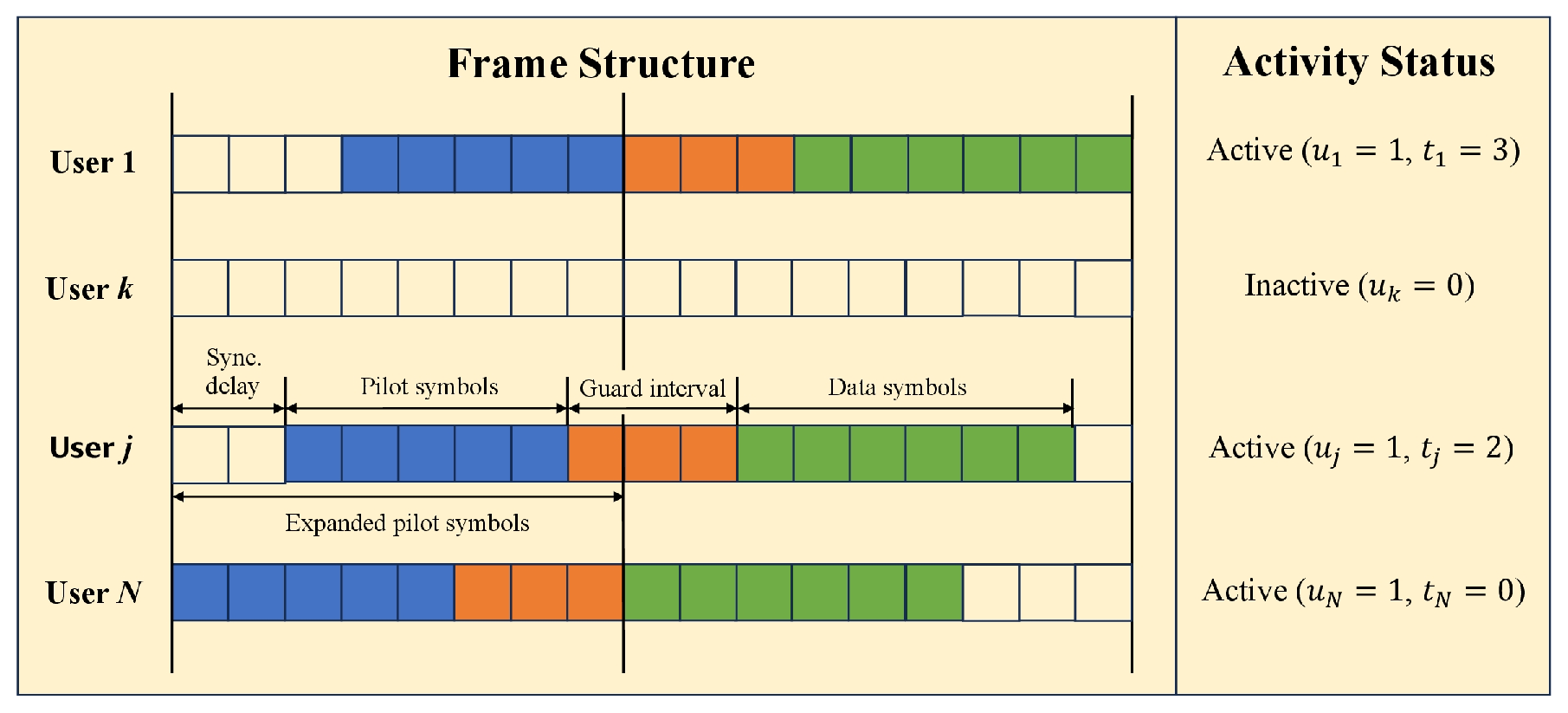}
\caption{Frame structure of asynchronous grant-free massive RA, where a guard interval spanning $T=3$ symbol periods are inserted between pilot and data symbols.}
\label{system}
\end{figure}

Since both the user activity and synchronization delay are unknown at the BS, we further define the expanded indicator $\boldsymbol{\phi}_{n}\triangleq [\phi_{n,0},\cdots,\phi_{n,T}]^{\mathrm{T}}$ for user $n$, where $\phi_{n,t}=1$ only when $u_{n}=1$ and $t_{n}=t$. The received pilot signal $\tilde{\mathbf{Y}} \in \mathbb{C}^{L\times M}$ at the BS is expressed as follows: 
\begin{align}
\tilde{\mathbf{Y}}&=\sqrt{\rho L}\mathbf{P}\mathbf{H}+\tilde{\mathbf{N}}, \label{eqsystemmodel}
\end{align}
\noindent where $\mathbf{P}\triangleq [\mathbf{P}_{1},\cdots,\mathbf{P}_{N}] \in \mathbb{C}^{L\times(T+1)N}$ concatenates the expanded pilot matrices of all users, and $\mathbf{H} \triangleq \left[\mathbf{H}_{1}^{\mathrm{T}},...,\mathbf{H}_{N}^{\mathrm{T}}\right]^{\mathrm{T}} \in \mathbb{C}^{(T+1)N\times M}$ stands for the expanded effective channel matrix with $\mathbf{H}_{n}\triangleq \boldsymbol{\phi}_{n}\otimes \mathbf{f}_{n} \in \mathbb{C}^{(T+1)\times M}$. Besides, $\rho$ is the transmit power, and $\tilde{\mathbf{N}}=\left[\tilde{\mathbf{n}}_{1},...,\tilde{\mathbf{n}}_{L}\right]^{\mathrm{T}}$ denotes the Gaussian noise with zero mean and variance $\sigma^{2}$ for each element. We normalize the received signal and noise as $\mathbf{Y}\triangleq \tilde{\mathbf{Y}} \slash \sqrt{\rho L}$ and $\mathbf{N}\triangleq \tilde{\mathbf{N}} \slash \sqrt{\rho L}$ respectively for notational convenience.

In contrast to synchronous grant-free massive RA systems where only the user activity and channel coefficients need to be estimated, the synchronization delays of active users warrant additional attention in asynchronous grant-free massive RA systems. However, the heterogeneous synchronization delays among active users breaks the standard assumption that entries of the expanded pilot matrix are independent and follow an identical Gaussian distribution, which makes existing AMP algorithms incompetent. In the following sections, we will develop efficient algorithms to detect the set of active users, and estimate their synchronization delays as well as channel coefficients.

\section{OAMP-based Algorithm for Asynchronous Massive RA \label{sectioniii}}
To tackle the limitation of AMP in handling the non-i.i.d. entries in the expanded pilot matrix $\mathbf{P}$, the framework of OAMP \cite{jma2017} emerges as an ideal candidate. However, the original OAMP framework was proposed for SMV problems, which cannot be directly applied to asynchronous grant-free RA systems with multi-antenna BSs. Therefore, we first derive the operations in an OAMP iteration based on the received signal of individual BS antennas in Section \ref{SMV}, where a denoiser is carefully designed according to prior information of the expanded effective channel matrix. Then, we further develop the OAMP iterations by utilizing the common sparsity among received signals of multiple BS antennas \cite{ymei2022}, i.e., enforcing the same prior sparsity ratio for all antennas, in Section \ref{extendtoMMV}, leading to the OAMP-based algorithm for asynchronous grant-free massive RA.

\subsection{OAMP Iteration for Each BS Antenna \label{SMV}}
Since both the user activity and synchronization delay are embedded
in $\mathbf{H}$ and can be determined accordingly, we first derive the operations in an OAMP iteration for each BS antenna based on its normalized received pilot signal given as follows:
\begin{align}
\mathbf{y}_{m}=\mathbf{P}\mathbf{h}_{m}+\mathbf{n}_{m}, \forall m \in \mathcal{M}, 
\label{SMVmodel}
\end{align}
\noindent where $\mathbf{y}_{m}$, $\mathbf{h}_{m}$, and $\mathbf{n}_{m}$ denote the $m$-th column of $\mathbf{Y}$, $\mathbf{H}$, and $\mathbf{N}$, respectively. The conventional OAMP algorithm iterates between a linear estimator (LE) and a non-linear estimator (NLE). In principle, by restricting a de-correlated estimator in LE and a divergence-free estimator in NLE, the input and output errors for both LE and NLE are orthogonal so that the OAMP algorithm achieves Bayes-optimal performance. Starting with $\mathbf{s}_{m}^{(1)}=\mathbf{0}$, the operations in the $i$-th OAMP iteration is expressed as follows:
\begin{align}
\text{LE:} \ \ \ \ \ \ \mathbf{r}_{m}^{(i)}=\mathbf{s}_{m}^{(i)}+\mathbf{W}_{m}^{(i)}(\mathbf{y}_{m}-\mathbf{P}\mathbf{s}_{m}^{(i)}),\label{LE}
\end{align}
\begin{align}
\begin{aligned}
\text{NLE:} \ \ \ \mathbf{s}_{m}^{(i+1)}&=\!\eta^{(i)}(\mathbf{r}_{m}^{(i)})\\
&=\!C_{m}^{(i)}\Big(\hat{\eta}^{(i)}(\mathbf{r}_{m}^{(i)})\!-\!\frac{\sum_{n=1}^{N}\sum_{t=0}^{T}[\hat{\eta}^{\prime(i)}(\mathbf{r}_{n,m}^{(i)})]_{t}}{(T+1)N}\mathbf{r}_{m}^{(i)}\Big),\label{NLE}
\end{aligned}
\end{align}
\noindent where $\mathbf{r}^{\left(i\right)}_{m}$ and $\mathbf{s}_{m}^{\left(i+1\right)}$ are respectively the output of the LE and NLE, and other notations will be introduced in the sequel.

\subsubsection{LE}
The LE aims at decorrelating the vector estimation problem in (\ref{SMVmodel}) to $N$ scalar estimation problems for each user, which is achieved by restricting the de-correlated estimator $\mathbf{W}_{m}^{(i)}$ as the following the optimal structure:
\begin{align}
\mathbf{W}_{m}^{(i)}=\frac{(T+1) N}{\operatorname{tr}\left(\hat{\mathbf{W}}_{m}^{(i)} \mathbf{P}\right)} \hat{\mathbf{W}}_{m}^{(i)},
\label{decor}
\end{align}
\noindent where $\hat{\mathbf{W}}_{m}^{(i)}\!=\!\mathbf{P}^{\mathrm{H}}\left(\mathbf{P} \mathbf{P}^{\mathrm{H}}+(\sigma^2/\rho L (v_{m}^{(i)})^{2}) \mathbf{I}_{L}\right)^{-1}$ is the LMMSE estimator with $(v_{m}^{(i)})^{2}$ representing the mean square error between the output of the NLE and the actual expanded channel vector $\mathbf{h}_{m}$ defined as
\begin{align}
(v_{m}^{(i)})^{2}\triangleq\frac{\mathbb{E}[||\mathbf{s}_{m}^{(i)}-\mathbf{h}_{m}||_{2}^{2}]}{(T+1)N}.
\label{errorv}
\end{align}
\noindent The empirical evaluation of $(v_{m}^{(i)})^{2}$ will be presented in (\ref{v_calculate}) when introducing the NLE. Similarly, the mean square error between the output of the LE and the actual expanded channel vector $\mathbf{h}_{m}$, i.e., $(\tau_{m}^{(i)})^{2}\triangleq\frac{\mathbb{E}[||\mathbf{r}_{m}^{(i)}-\mathbf{h}_{m}||_{2}^{2}]}{(T+1)N}$, can be empirically evaluated as follows:
\begin{align}
(\tau_{m}^{(i)})^{2} \approx \frac{\operatorname{tr}\left(\mathbf{B}_{m}^{(i)} (\mathbf{B}_{m}^{(i)})^{\mathrm{H}}\right) (v_{m}^{(i)})^2+\operatorname{tr}\left(\mathbf{W}_{m}^{(i)} (\mathbf{W}_{m}^{(i)})^{\mathrm{H}}\right) \cdot \frac{\sigma^2}{\rho L}}{(T+1) N} ,
\label{errortau}
\end{align}
\noindent where $\mathbf{B}_{m}^{(i)}\triangleq \mathbf{I}_{(T+1)N}-\mathbf{W}_{m}^{(i)}\mathbf{P}$.

\subsubsection{NLE} The NLE applies a denoiser $\eta^{(i)}(\cdot)$ to the output of the LE, which is restricted to be divergence-free, i.e., $\mathbb{E}\left[\eta^{\prime(i)}(\cdot)\right] = 0$ with $\eta^{\prime(i)}(\cdot)$ denoting the ﬁrst-order derivative of $\eta^{(i)}(\cdot)$. Therefore, $\hat{\eta}^{(i)}(\cdot)$ in the optimal NLE is the section-wise MMSE denoiser given as follows:
\begin{align}
\hat{\eta}^{(i)}(\mathbf{r}_{n,m}^{(i)})&=\mathbb{E}[\mathbf{h}_{n,m} \mid \mathbf{r}_{n,m}^{(i)}],
\label{MMSEmean}
\end{align}
\noindent where $\mathbf{h}_{m}$ and $\mathbf{r}_{m}$ contain $N$ sections, i.e., $\mathbf{h}_{m}=[\mathbf{h}_{1,m}^{\mathrm{T}},\cdots,\mathbf{h}_{N,m}^{\mathrm{T}}]^{\mathrm{T}}$ and $\mathbf{r}_{m}^{(i)}=[(\mathbf{r}_{1,m}^{(i)})^{\mathrm{T}},\cdots,$ $(\mathbf{r}_{N,m}^{(i)})^{\mathrm{T}}]^{\mathrm{T}}$. This means that the estimation result of the denoiser in the $i$-th OAMP iteration is the posterior mean of $\mathbf{h}_{n,m}$ given $\mathbf{r}_{n,m}^{(i)}$. Thus, the posterior variance of each entry in $\mathbf{h}_{n,m}$, i.e., $h_{n,t,m}$, is given as follows:
\begin{align}
\psi_{n,t,m}^{(i)}=\mathbb{E}[||[\hat{\eta}^{(i)}(\mathbf{r}_{n,m}^{(i)})]_{t}-h_{n,t,m}||^{2}].
\label{MMSEvar}
\end{align}

\noindent Besides, $C_{m}^{(i)}$ in the optimal NLE, i.e., (\ref{NLE}), is given as follows:
\begin{align}
C_{m}^{(i)}=\frac{(\tau_{m}^{(i)})^{2}}{(\tau_{m}^{(i)})^{2}-\bar{\psi}_{m}^{(i)}},
\label{Cm}
\end{align}
where $\bar{\psi}_{m}^{(i)}\triangleq \frac{\sum_{n=1}^{N}\sum_{t=0}^{T}\psi_{n,t,m}^{(i)}}{(T+1)N}$. On the other hand, the term $\frac{\sum_{n=1}^{N}\sum_{t=0}^{T}[\hat{\eta}^{\prime(i)}(\mathbf{r}_{n,m}^{(i)})]_{t}}{(T+1)N}$ is derived as
\begin{align}
\frac{\sum_{n=1}^{N}\sum_{t=0}^{T}[\hat{\eta}^{\prime(i)}(\mathbf{r}_{n,m}^{(i)})]_{t}}{(T+1)N}=\frac{\bar{\psi}_{m}^{(i)}}{(\tau_{m}^{(i)})^{2}},
\label{bigterm}
\end{align}

\noindent and $(v_{m}^{(i+1)})^{2}$ can be calculated as
\begin{align}
(v_{m}^{(i+1)})^{2} \approx \left(\frac{1}{\bar{\psi}_{m}^{(i)}}-\frac{1}{(\tau_{m}^{(i)})^{2}}\right)^{-1}.
\label{v_calculate}
\end{align}

It remains to obtain the posterior mean and variance in (\ref{MMSEmean}) and (\ref{MMSEvar}), for which, the posterior distribution of $\mathbf{h}_{m}$ is needed. For this purpose, the prior information and the likelihood function of $\mathbf{h}_{m}$ need first to be obtained. Since all users become active with equal probability and the synchronization delay of an active user is uniformly distributed, for user $n$, there is at most one non-zero element in its expanded channel vector $\mathbf{h}_{n,m}$, i.e., it is either inactive or active and has a particular synchronization delay. As a result, we model the prior information of $\mathbf{h}_{n,m}$ for user $n$ as follows:
\begin{align}
\begin{aligned}
p\left(\mathbf{h}_{n,m}\right)&=\left(1-\sum_{t=0}^{T}\lambda_{n,t,m}\right)\prod_{t=0}^{T}\delta_{0}\left(h_{n, t,m}\right)\\
&+\sum_{t=0}^{T}\left(\lambda_{n,t,m}\mathcal{C N}\left(h_{n,t,m} ; 0, \beta_{n}\right)\prod_{t^{\prime}\neq t}\delta_0(h_{n,t^{\prime},m})\right),
\label{prior}
\end{aligned}
\end{align}

\noindent where $\lambda_{n,t,m}=\frac{K}{(T+1)N}$, $\forall t \in \{0,\cdots,T\}$. Since $\mathbf{r}_{n,m}^{(i)}$ is modeled as an observation of $\mathbf{h}_{n,m}$ with additive white Gaussian noise (AWGN) in the OAMP framework, i.e., $\mathbf{r}_{n,m}^{(i)}=\mathbf{h}_{n,m}+\tau_{m}^{(i)}\mathbf{z}_{n,m}$ with $\mathbf{z}_{n,m}\sim \mathcal{CN}(\mathbf{0},\mathbf{I}_{T+1})$ independent of $\mathbf{h}_{n,m}$, according to the Bayes' theorem, the posterior distribution of $\mathbf{h}_{n,m}$ can be obtained as follows:
\begin{align}
\begin{aligned}
p\left(\mathbf{h}_{n,m}|\mathbf{r}_{n,m}^{(i)}\right)&=\left(1-\sum_{t=0}^{T}\pi_{n,t,m}^{(i)}\right)\prod_{t=0}^{T}\delta_{0}\left(h_{n, t,m}\right)\\
&+\sum_{t=0}^{T}\Bigg(\pi_{n,t,m}^{(i)}\mathcal{C N}\left(h_{n,t,m}; \mu_{n,t,m}^{(i)}, \Gamma_{n,t,m}^{(i)}\right)\\
&\times \prod_{t^{\prime}\neq t}\delta_0(h_{n,t^{\prime},m})\Bigg),
\label{postdistri}
\end{aligned}
\end{align}


\noindent where $\mu_{n,t,m}^{(i)}=\frac{\beta_{n}r_{n,t,m}^{(i)}}{(\tau_{m}^{(i)})^{2}+\beta_{n}}$, $\Gamma_{n,t,m}^{(i)}=\frac{(\tau_{m}^{(i)})^{2}\beta_{n}}{(\tau_{m}^{(i)})^{2}+\beta_{n}}$, and the posterior sparsity ratio is given as
\begin{align}
\pi_{n,t,m}^{(i)}=\frac{e^{\frac{\beta_{n}|r_{n,t,m}^{(i)}|^2}{(\tau_{m}^{(i)})^{2}(\beta_{n}+(\tau_{m}^{(i)})^{2})}}}{\left(1\!+\!\frac{\beta_{n}}{(\tau_{m}^{(i)})^{2}}\right)\frac{1\!-\!\sum\limits_{t^{\prime}=0}^{T}\lambda_{n,t^{\prime},m}}{\lambda_{n,t,m}}\!+\!\sum\limits_{t^{\prime}=0}^{T}\frac{\lambda_{n,t^{\prime},m}}{\lambda_{n,t,m}}e^{\frac{\beta_{n}|r_{n,t^{\prime},m}^{(i)}|^2}{(\tau_{m}^{(i)})^{2}(\beta_{n}+(\tau_{m}^{(i)})^{2})}}}.
\label{postpara}
\end{align}
Therefore, the posterior mean (\ref{MMSEmean}) and variance (\ref{MMSEvar}) are given respectively as follows:
\begin{align}
[\hat{\eta}^{(i)}(\mathbf{r}_{n,m}^{(i)})]_t=\pi_{n,t,m}^{(i)}\mu_{n,t,m}^{(i)}.
\label{postmeancal}
\end{align}
\begin{align}
\psi_{n,t,m}^{(i)}=\pi_{n,t,m}^{(i)}(|\mu_{n,t,m}^{(i)}|^{2}+\Gamma_{n,t,m}^{(i)})-|[\hat{\eta}^{(i)}(\mathbf{r}_{n,m}^{(i)})]_{t}|^{2}.
\label{postvarcal}
\end{align}

\subsection{OAMP-based Algorithm for Multi-antenna BSs \label{extendtoMMV}}
The OAMP iteration for each BS antenna developed in Section \ref{SMV} neglects the common sparsity among received pilot signals of multiple BS antennas, which could be leveraged to enhance the estimation performance. Since the BS antennas receive pilot signal at the same instant, a common activity indicator is formed to update the prior information for all of them. Specifically, we replace $\lambda_{n,t,m}$ in (\ref{prior}) by variable $\omega_{n,t}^{\left(i\right)}$, which is updated according to the posterior sparsity ratio $\pi_{n,t,m}^{(i-1)}, \forall m$ obtained in the $(i-1)$-th OAMP iteration as follows:
\begin{align}
\omega_{n,t}^{(i)}=\frac{1}{M}\sum_{m=1}^{M}\pi_{n,t,m}^{(i-1)},
\label{sparsityratio}
\end{align}

\noindent i.e., the structured sparsity in the received signals of multiple BS antennas is learned by averaging their posterior sparsity ratios. Thus, the updated prior information in the $i$-th OAMP iteration for a multi-antenna BS is expressed as follows:
\begin{align}
\begin{aligned}
p\left(\mathbf{h}_{n,m}\right)&=\left(1-\sum_{t=0}^{T}\omega_{n,t}^{\left(i\right)}\right)\prod_{t=0}^{T}\delta_{0}\left(h_{n, t,m}\right)\\
&+\sum_{t=0}^{T}\left(\omega_{n,t}^{\left(i\right)}\mathcal{C N}\left(h_{n,t,m} ; 0, \beta_{n}\right)\prod_{t^{\prime}\neq t}\delta_0(h_{n,t^{\prime},m})\right).
\label{refinedprior}
\end{aligned}
\end{align}

\noindent Accordingly, we modify the posterior sparsity ratio as follows: 
\begin{align}
\pi_{n,t,m}^{(i)}=\frac{e^{\frac{\beta_{n}|r_{n,t,m}^{(i)}|^2}{(\tau_{m}^{(i)})^{2}(\beta_{n}+(\tau_{m}^{(i)})^{2})}}}{\left(1+\frac{\beta_{n}}{(\tau_{m}^{(i)})^{2}}\right)\frac{(1-\sum\limits_{t^{\prime}=0}^{T}\omega_{n,t^{\prime}}^{\left(i\right)})}{\omega_{n,t}^{\left(i\right)}}+\sum\limits_{t^{\prime}=0}^{T}\frac{\omega^{(i)}_{n,t^{\prime}}}{\omega^{(i)}_{n,t}}e^{\frac{\beta_{n}|r_{n,t^{\prime},m}^{(i)}|^2}{(\tau_{m}^{(i)})^{2}(\beta_{n}+(\tau_{m}^{(i)})^{2})}}}.
\label{refinedpostpara}
\end{align}

This fully formulates our proposed OAMP-based algorithm for joint activity-delay detection and channel estimation with a multi-antenna BS, with details summarized in Algorithm \ref{algo1}. After the OAMP-based algorithm terminates at the $I$-th iteration, the effective channel vector is estimated as $\hat{\mathbf{h}}_{m}=\hat{\eta}^{(I)}(\mathbf{r}_{m}^{(I)})$, $\forall m$, and the set of active users are determined as $\hat{\mathcal{K}}\triangleq \{n\in \mathcal{N}|\sum_{t=0}^{T} \omega_{n,t}^{(I+1)}\geq \theta\}$, where $\theta$ is an empirical threshold. Besides, the synchronization delay for an estimated active user is determined as $t_{n}= \mathop{\arg\max}\limits_{t \in \{0,\cdots,T\}} \omega_{n,t}^{(I+1)}$, $\forall n \in \hat{\mathcal{K}}$.
\begin{algorithm}[htpb]
\caption{The Proposed OAMP-based Algorithm \label{algo1}}
{\bf Input:}
The normalized received pilot signal $\mathbf{Y}$, pilot sequences $\mathbf{P}$, maximum number of iterations $Q$, maximum synchronization
delay $T$, and accuracy tolerance $\epsilon$.\\
{\bf Output:}
The estimated effective channel matrix $\hat{\mathbf{H}}$, set of active users $\hat{\mathcal{K}}$, and synchronization delays $\{t_{n}\}$'s, $n \in \hat{\mathcal{K}}$.\\
{\bf Initialize:}
$i \leftarrow 0$, $\mathbf{s}_{m}^{(1)}=\mathbf{0}$, $\forall m \in \mathcal{M}$, $v_{m}^{(1)}=0$, $\forall m\in \mathcal{M}$, $\omega_{n,t}^{(1)}=\frac{\lambda}{T+1}$, $\forall n \in \mathcal{N}$, $\forall t \in \{0,\cdots,T\}$.
\begin{algorithmic}[1]
\WHILE{{$i < Q$} \text{and} {$\frac{\sum_{m}||\mathbf{s}_{m}^{(i)}-\mathbf{s}_{m}^{(i-1)}||_{2}^{2}}{\sum_{m}||\mathbf{s}_{m}^{(i-1)}||_{2}^{2}} > \epsilon$}}
\STATE $i \leftarrow i+1$
\Statex \quad //\textit{The LE}//
\STATE Calculate $\mathbf{W}_{m}^{(i)}$, $\forall m$ according to (\ref{decor}).
\STATE Perform the LE to obtain $\mathbf{r}_{m}^{\left(i\right)}, \forall m$ via (\ref{LE}).
\STATE Calculate $(\tau_{m}^{(i)})^{2}$, $\forall m$ according to (\ref{errortau}).
\Statex \quad //\textit{The NLE}//
\STATE Update the prior sparsity ratio $\omega_{n,t}^{(i)}$, $\forall n, t$ and the
\Statex \quad corresponding prior information $p(\mathbf{h}_{n,m})$, $\forall n, m$ accord-
\Statex \quad ing to (\ref{sparsityratio}) and (\ref{refinedprior}), respectively.
\STATE Calculate the posterior mean $[\hat{\eta}^{(i)}(\mathbf{r}_{n,m}^{(i)})]_{t}$ and variance
\Statex \quad $\psi_{n,t,m}^{(i)}$, $\forall n, t, m$ according to (\ref{postmeancal}) and (\ref{postvarcal}), respectively.
\STATE Calculate $C_{m}^{(i)}$, $\frac{\sum_{n=1}^{N}\sum_{t=0}^{T}[\hat{\eta}^{\prime(i)}(\mathbf{r}_{n,m}^{(i)})]_{t}}{(T+1)N}$, and $(v_{m}^{(i+1)})^{2}$,
\Statex \quad $\forall n,t,m$ according to (\ref{Cm}), (\ref{bigterm}), and (\ref{v_calculate}), respectively.
\STATE Perform the NLE to obtain $\mathbf{s}_{m}^{(i+1)}$, $\forall m$ 
\Statex \quad according to (\ref{NLE}).
\ENDWHILE
\STATE Obtain $\hat{\mathbf{H}}=[\hat{\mathbf{h}}_{1},\cdots,\hat{\mathbf{h}}_{M}]$ with $\hat{\mathbf{h}}_{m}=\hat{\eta}^{(I)}(\mathbf{r}_{m}^{(I)})$.
\STATE Determine the set of active users as $\hat{\mathcal{K}}\triangleq \{n\in \mathcal{N}|\sum_{t=0}^{T} \omega_{n,t}^{(I+1)}\geq \theta\}$.
\STATE Determine the synchronization delays as $t_{n}=\mathop{\arg\max}\limits_{t \in \{0,\cdots,T\}} \omega_{n,t}^{(I+1)}$, $\forall n$.
\end{algorithmic}
\end{algorithm}

\section{A Low-Complexity FPAMP Approach for Asynchronous Massive RA \label{sectioniv}}
Although the OAMP-based algorithm is effective in solving the joint activity-delay detection and channel estimation problem, the high-complexity LMMSE estimator limits its application in practical mMTC systems that demand fast processing. In this section, we propose a novel algorithm for joint activity-delay detection and channel estimation based on the FPAMP framework, which has computational complexity at the same order as that of the standard AMP algorithm, while maintains similar performance as the OAMP-based algorithm.

\subsection{Free Probability Theory}
To develop the FPAMP-based algorithm, we first introduce basics of the free probability theory \cite{rspe2019,dvo1986}, which was established to calculate moments and distributions of non-commutative random variables, such as random matrices. For two independent scalar random variables $x_1$ and $x_2$, one can easily conclude $\mathbb{E}[x_1x_2x_1x_2]=\mathbb{E}[x_1^2x_2^2]$ with classical probability theory. However, the result is not necessarily valid for two random matrices $\mathbf{X}_{1}$ and $\mathbf{X}_{2}$ since they are non-commutative. Therefore, free probability theory brings in a concept of free independence between non-commutative random variables in analogous to the independence between commutative random variables. Since free independence is based on the non-commutative probability space, we review their definitions as follows.
\begin{definition}
(Non-commutative Probability Space) An algebraic non-commutative space is a pair ($\mathcal{A}$, $\phi$) consisting of a unital algebra $\mathcal{A}$, which consists an identity element $1$ (The unit of the algebra is usually denoted as $1$) with $1x = x = x1$, $\forall x \in \mathcal{A}$, and a linear function $\phi: \mathcal{A}\rightarrow \mathbb{C}$ such that $\phi(1)=1$. Elements from $\mathcal{A}\triangleq \{a_1,\cdots,a_k\}$ are addressed as non-commutative random variables.
\end{definition}

\begin{definition}
(Free Independence) Consider an algebraic non-commutative space ($\mathcal{A}$, $\phi$) and let $\{\mathcal{A}_{i}\}$ be the unital subalgebras of $\mathcal{A}$, i.e., $\mathcal{A}_{i} \subset \mathcal{A}$, $\forall i$. The subalgebras are said to be free independent if $\phi(a_1\cdots a_n)=0$ when $i_1 \neq i_2$, $i_2 \neq i_3$, $\cdots$, $i_{n-1} \neq i_{n}$, $a_k \in \mathcal{A}_{i_{k}}$ and $\phi(a_k)=0$, $\forall k = 1,\cdots, n$. The random variables are called free independent if their generated unital subalgebras are free independent.
\end{definition}

While it is difficult to explicitly define the PDFs for non-commutative random variables (e.g., random matrices), we turn to their moments since the characteristic functions and PDFs can often be obtained based on moments. The free cumulants of symmetric random matrices for moment calculations are defined below.
\begin{definition}
(Free Cumulants) Let $X$ be a random variable with finite moments of all orders and $m_{k}\triangleq \mathbb{E}[X^{k}]$ be its $k$-th moment. Suppose the law of $X$ is the empirical eigenvalue distribution of a symmetric random matrix $\mathbf{A} \in \mathbb{C}^{n\times n}$. The free cumulants of $\mathbf{A}$, denoted by $(\kappa_{k})_{k=1}^{\infty}$, are defined recursively via the moment-cumulant relation as follows:
\begin{align}
m_{k}= \sum_{\pi \in \mathcal{NC}(k)} \prod_{j=1}^{k} \kappa_{j}^{N_{j}(\pi)},
\end{align}
where $\pi$ is a partition of $\mathcal{NC}(k)$, i.e., the set of all non-crossing partitions of $[k] \triangleq \{1,2,\cdots,k\}$\footnote{A partition $\pi$ of set $[k]$ is defined as $\pi \triangleq \{V_1,\cdots,V_s\}$ such that $V_1,\cdots,V_s\subset [k]$, where $V_i \neq \emptyset$, $V_i \cap V_j = \emptyset$, $i\neq j \in \{1,\cdots, k\}$, and $V_1 \cup \cdots \cup V_s = [k]$. Subsets $V_1,\cdots,V_s$ are called the blocks of $\pi$, and the length of a block denotes the number of its elements. Besides, if there exists $i<j<l<s$ such that $i$ and $l$ are in one block $V_p$, and $j$ and $s$ are in another block $V_q$, we call that $V_p$ and $V_q$ cross. If there is no pair of blocks in $\pi$ crosses, $\pi$ is called a non-crossing partition.}, and $N^{j}(\pi)$ represents the number of length-$j$ blocks in $\pi$.
\end{definition}

As will be seen later, the empirical eigenvalue distributions of random matrices are critical to the FPAMP framework. However, since the pilot matrix $\mathbf{P}$ is not square, we next introduce the rectangular free cumulants of rectangular random matrices.
\begin{definition}
(Rectangular Free Cumulants) Let $X$ be a random variable with finite moments of all orders and $m_{2k}\triangleq \mathbb{E}[X^{2k}]$ be its even moments. Suppose the law of $X^2$ represents the empirical eigenvalue distribution of $\mathbf{A}\mathbf{A}^{H}$ with $\mathbf{A} \in \mathbb{C}^{n\times d}$ denoting a rectangular random matrix. The rectangular free cumulants $\left\{\kappa_{2 k}\right\}_{k \geq 1}$ of $\mathbf{A}$ are defined recursively via the moment-cumulant relation as follows:
\begin{align}
m_{2 k}= \frac{n}{d}\sum_{\pi \in \mathcal{NC}(2 k)} \prod_{\substack{\mathcal{S} \in \pi \\ \min \mathcal{S} \text { is odd }}} \kappa_{|\mathcal{S}|} \prod_{\substack{\mathcal{S}^{\prime} \in \pi \\ \min \mathcal{S}^{\prime} \text { is even }}} \kappa_{|\mathcal{S}^{\prime}|},
\end{align}
\noindent where each $\mathcal{S} \in \pi \in \mathcal{NC}\left(2k\right)$ has even cardinality.
\end{definition}
With the basic knowledge on free probability theory, we propose the low-complexity FPAMP-based receiver in the next subsection.

\subsection{The Proposed FPAMP-based Algorithm}
We propose to reduce the high computational complexity caused by LMMSE estimator in the OAMP-based algorithm by utilizing the memorized information from all the preceding iterations. This is motivated by a solution to the Thouless-Anderson-Palmer (TAP) equations of Ising models with general invariant matrices \cite{mopper2016,zfan2020}, where iterative algorithms that utilize the information of preceding iterations are analyzed. Our proposed FPAMP-based algorithm evolves from the AMP framework, which takes a major detour from the memory AMP algorithm \cite{xbian2023conference} that was developed under the OAMP framework. Based on the normalized received pilot signal for individual BS antennas in (\ref{SMVmodel}), the FPAMP-based algorithm iteratively performs the following updates:
\begin{align}
\mathbf{h}_{m}^{(i)} = \mathbf{P}^{\mathrm{H}}\mathbf{s}_{m}^{(i)}-\sum_{j=1}^{i-1}\gamma_{m,j}^{(i)}\tilde{\mathbf{h}}_{m}^{(j)},
\label{fpamp1}
\end{align}
\begin{align}
\tilde{\mathbf{h}}_{m}^{(i)} = f^{(i)}(\mathbf{h}_{m}^{(1)},\cdots,\mathbf{h}_{m}^{(i)}),
\label{fpamp2}
\end{align}
\begin{align}
\mathbf{r}_{m}^{(i)} = \mathbf{P}\tilde{\mathbf{h}}_{m}^{(i)}-\sum_{j=1}^{i}\alpha_{m,j}^{(i)}\mathbf{s}_{m}^{(j)},
\label{fpamp3}
\end{align}
\begin{align}
\mathbf{s}_{m}^{(i+1)} = g^{(i+1)}(\mathbf{r}_{m}^{(1)},\cdots,\mathbf{r}_{m}^{(i)},\mathbf{y}_{m}),
\label{fpamp4}
\end{align}
where $\mathbf{s}_{m}^{(1)}=\mathbf{y}_{m}$ and $\mathbf{h}_{m}^{(1)}=\mathbf{P}^{\mathrm{H}}\mathbf{s}_{m}^{(1)}$. In (\ref{fpamp2}) and (\ref{fpamp4}), $f^{(i)}\left(\cdot\right)$ and $g^{(i)}\left(\cdot\right)$ are Lipschitz functions, which are called denoisers and will be developed in the following. Besides, $\{\alpha_{m,j}^{(i)}\}_{j=1}^{i}$ and $\{\gamma_{m,j}^{(i)}\}_{j=1}^{i-1}$ are the Onsager parameters, which restrict the joint distributions of $\{\mathbf{h}_{m}^{(i)},\mathbf{r}_{m}^{(i)}\}$ conditioned on $\mathbf{h}_{m}$ to Gaussian. 

\subsubsection{Onsager coefficients} We compute the Onsager coefficients via auxiliary matrices $\mathbf{\Psi}_{m}^{(i+1)} \in\mathbb{C}^{\left(i+1\right)\times\left(i+1\right)}$, $\mathbf{\Phi}_{m}^{(i+1)} \in \mathbb{C}^{(i+1)\times(i+1)}$ defined as follows:
\begin{align}
\mathbf{\Psi}_{m}^{(i+1)}=\left[\begin{array}{ccccc}
0 & 0 & \ldots & 0 & 0 \\
0 & \left\langle\partial_1 \tilde{\mathbf{h}}_{m}^{(1)}\right\rangle & 0 & \ldots & 0 \\
0 & \left\langle\partial_1 \tilde{\mathbf{h}}_{m}^{(2)}\right\rangle & \left\langle\partial_2 \tilde{\mathbf{h}}_{m}^{(2)}\right\rangle & \ldots & 0 \\
\vdots & \vdots & \vdots & \ddots & \vdots \\
0 & \left\langle\partial_1 \tilde{\mathbf{h}}_{m}^{(i)}\right\rangle & \left\langle\partial_2 \tilde{\mathbf{h}}_{m}^{(i)}\right\rangle & \ldots & \left\langle\partial_i \tilde{\mathbf{h}}_{m}^{(i)}\right\rangle
\end{array}\right],
\end{align}
\begin{align}
\mathbf{\Phi}_{m}^{(i+1)}=\left[\begin{array}{ccccc}
0 & 0 & \ldots & 0 & 0 \\
\left\langle\partial_{\mathbf{G}_{m}} \mathbf{s}_{m}^{(1)}\right\rangle & 0 & 0 & \ldots & 0 \\
\left\langle\partial_{\mathbf{G}_{m}} \mathbf{s}_{m}^{(2)}\right\rangle & \left\langle\partial_1 \mathbf{s}_{m}^{(2)}\right\rangle & 0 & \ldots & 0 \\
\vdots & \vdots & \ddots & \vdots & \vdots \\
\left\langle\partial_{\mathbf{G}_{m}} \mathbf{s}_{m}^{(i)}\right\rangle & \left\langle\partial_1 \mathbf{s}_{m}^{(i)}\right\rangle & \ldots & \left\langle\partial_{i-1} \mathbf{s}_{m}^{(i)}\right\rangle & 0
\end{array}\right],
\end{align}
where $\partial_k \tilde{\mathbf{h}}_{m}^{(i)} \in \mathbb{C}^{(T+1)N\times 1}$ denotes the partial derivatives $\Big[\partial_{h_{1,m}^{(k)}} f^{(i)}(h_{1,m}^{(1)},\cdots,h_{1,m}^{(i)}), \cdots,\partial_{h_{(T+1) N,m}^{(k)}}f^{(i)}(h_{(T+1)N,m}^{(1)},\!\cdots\!,$ $h_{(T+1) N,m}^{(i)})\Big]^{\mathrm{T}}$ for $k \in \{1,\cdots, i\}$ with $h_{l,m}^{(i)}$ denoting the $l$-th element of $\mathbf{h}_{m}^{(i)}$. Besides, $\mathbf{G}_{m} \triangleq \mathbf{P}\mathbf{h}_{m}$ and $\partial_{\mathbf{G}_{m}} \mathbf{s}_{m}^{(i)} \in \mathbb{C}^{L\times 1}$ denotes the partial derivatives $\Big[\partial_{G_{1,m}} g^{(i)}(r_{1,m}^{(1)},\cdots,r_{1,m}^{(i-1)}, y_{1,m}),\cdots,\partial_{G_{L,m}} g^{(i)}(r_{L,m}^{(1)},\cdots,$ $r_{L,m}^{(i-1)},y_{L,m})\Big]^{\mathrm{T}}$ with $G_{l,m}$, $r_{l,m}^{(i)}$, and $y_{l,m}$ denoting the $l$-th element of $\mathbf{G}_{m}$, $\mathbf{r}_{m}^{(i)}$, and $\mathbf{y}_{m}$, respectively. Similarly, $\partial_k \mathbf{s}_{m}^{(i)} \in \mathbb{C}^{L\times 1}$ denotes the partial derivatives $\Big[\partial_{r_{1,m}^{(k)}} g^{(i)}(r_{1,m}^{(1)},\cdots,r_{1,m}^{(i-1)},y_{1,m}),\cdots,\partial_{r_{L,m}^{(k)}} g^{(i)}(r_{L,m}^{(1)},\cdots,$ $r_{L,m}^{(i-1)},y_{L,m})\Big]^{\mathrm{T}}$ for $k \in \{1,\cdots, i-1\}$. 

Then, we define matrices $\mathbf{M}_{m,\alpha}^{(i+1)}$, $\mathbf{M}_{m,\gamma}^{(i+1)} \in \mathbb{C}^{(i+1)\times (i+1)}$ as follows:
\begin{align}
\mathbf{M}^{(i+1)}_{m,\alpha}=\sum_{j=0}^{i+1} \kappa_{2(j+1)} \mathbf{\Psi}_{m}^{(i+1)}\left(\mathbf{\Phi}_{m}^{(i+1)} \mathbf{\Psi}_{m}^{(i+1)}\right)^j,
\label{Malpha}
\end{align}
\begin{align}
\mathbf{M}^{(i+1)}_{m,\gamma}=\frac{L}{(T+1)N} \sum_{j=0}^{i} \kappa_{2(j+1)} \mathbf{\Phi}_{m}^{(i+1)}\left(\mathbf{\Psi}_{m}^{(i+1)} \mathbf{\Phi}_{m}^{(i+1)}\right)^j,
\label{Mgamma}
\end{align}
where $\{\kappa_{2k}\}_{k \geq 1}$ denotes the rectangular free cumulants of $\mathbf{P}$, which can be calculated by exploiting the connection with the formal power series, i.e., a generalization of polynomials that allows an infinite number of terms. In particular, $\kappa_{2}= m_{2}$, and when $k>1$ \cite{fl2009},
\begin{align}
\kappa_{2 k}\!=\! m_{2 k}\!-\!\left[z^k\right]\!\left(\sum_{j=1}^{k-1} \!\kappa_{2 j}\!\left(z\!\left(\frac{L}{(T+1)N} M(z)+1\!\right)\!\left(M(z)+1\right)\right)^j\!\right).
\label{freecumulant}
\end{align}
Note that $\{m_{2 k}\}_{k=1}^{i}$ denotes the first $i$ moments of the empirical eigenvalue distribution of $\mathbf{P}\mathbf{P}^{\mathrm{H}}$, $M(z)\triangleq \sum_{k^{\prime}=1}^{\infty} m_{2 k^{\prime}} z^{k^{\prime}}$, and $\left[z^k\right](q(z))$ denotes the coefficient of $z^k$ in polynomial $q(z)$. The moments $\{m_{2 k}\}_{k=1}^{i}$ can be estimated in $\mathcal{O}(L(T+1)N)$ following \cite{marco2022}, which is at the same complexity order of the proposed FPAMP-based algorithm in one iteration as will be analyzed in Section \ref{sectioncomplexity}. To do so, one may sample a standard complex Gaussian vector $\mathbf{c}_{0} \sim \mathcal{CN}(\mathbf{0}_{L},\mathbf{I}_{L})$ and compute $\mathbf{c}_{k}=\mathbf{P}^{\mathrm{H}}\mathbf{c}_{k-1}$ for odd $k$'s and $\mathbf{c}_{k}=\mathbf{P}\mathbf{c}_{k-1}$ for even $k$'s. Thus, $\frac{||\mathbf{c}_{k}||^{2}}{(T+1)L}$ is a consistent estimate of the $k$-th moment of the empirical eigenvalue distribution of $\mathbf{P}\mathbf{P}^{\mathrm{H}}$. Besides, we assume that the empirical distribution of the singular values of $\mathbf{P}$ converges weakly almost surely to a random variable with its even moments and rectangular free cumulants expressed as $\{\bar{m}_{2k}\}_{k\geq1}$ and $\{\bar{\kappa}_{2k}\}_{k\geq 1}$. Therefore, as $L,N \rightarrow \infty$, $m_{2k} \rightarrow \bar{m}_{2k}$ and $\kappa_{2k} \rightarrow \bar{\kappa}_{2k}$, $\forall k$.

Finally, we obtain the Onsager coefficients $\{\alpha_{m,j}^{(i)}\}_{j=1}^{i}$ and $\{\gamma_{m,j}^{(i)}\}_{j=1}^{i-1}$ from the last row of $\mathbf{M}^{(i+1)}_{m,\alpha}$ and $\mathbf{M}^{(i+1)}_{m,\gamma}$ as follows:
\begin{align}
\left(\alpha_{m,1}^{(i)}, \ldots, \alpha_{m,i}^{(i)}\right)=\left(\left[\mathbf{M}_{m,\alpha}^{(i+1)}\right]_{i+1,2}, \ldots,\left[\mathbf{M}_{m,\alpha}^{(i+1)}\right]_{i+1, i+1}\right),
\label{alphaupdate}
\end{align}
\begin{align}
\left(\gamma_{m,1}^{(i)}, \ldots, \gamma_{m, i-1}^{(i)}\right)=\left(\left[\mathbf{M}_{m,\gamma}^{(i+1)}\right]_{i+1,2}, \ldots,\left[\mathbf{M}_{m,\gamma}^{(i+1)}\right]_{i+1, i}\right).
\label{gammaupdate}
\end{align}

\subsubsection{State evolution and joint empirical distributions} In order to design the denoisers $f^{(i)}\left(\cdot\right)$ and $g^{(i)}\left(\cdot\right)$, we first present the state evolution of the FPAMP-based algorithm and the convergence of the joint empirical distributions of related random variables, as shown in Theorem \ref{theoremconverge}. To proceed, we introduce the definition of $W_2$-convergence \cite{cvi2009}.
\begin{definition}
    ($W_2$-convergence) We use $(\mathbf{g}^{(1)},\cdots,\mathbf{g}^{(i)}) \stackrel{W_2}{\longrightarrow}(G^{(1)}, \cdots, G^{(i)})$ to denote the joint empirical distributions of the rows of the random matrix $[\mathbf{g}^{(1)},\cdots,\mathbf{g}^{(i)}] \in \mathbb{C}^{d \times i}$ converge to the law of random vector $[G^{(1)}, \cdots, G^{(i)}]$ in the Wasserstein space of order 2. Specifically [Corollary 7.4, 36], if a function $\psi$: $\mathbb{C}^{i}\rightarrow \mathbb{C}$ satisfies
    \begin{align}
        |\psi(\mathbf{u})-\psi(\mathbf{v})| \leq L\|\mathbf{u}-\mathbf{v}\|_2 (1+\|\mathbf{u}\|_2+\|\mathbf{v}\|_2)
    \end{align}
    for all $\mathbf{u}$, $\mathbf{v} \in \mathbb{C}^{i}$, and $L>0$, we have
    \begin{align}
        \lim _{d \rightarrow \infty} \frac{1}{d} \sum_{j=1}^d \psi\left(g_j^{(1)}, \ldots, g_j^{(i)}\right)=\mathbb{E}\left\{\psi\left(G^{(1)}, \ldots, G^{(i)}\right)\right\},
    \end{align}
    if $(\mathbf{g}^{(1)},\cdots,\mathbf{g}^{(i)}) \stackrel{W_2}{\longrightarrow}(G^{(1)}, \cdots, G^{(i)})$. In addition, function $\psi$ is pseudo-Lipschitz of order 2.
\end{definition}
\begin{theorem}
\label{theoremconverge}
    Let $\psi$: $\mathbb{C}^{2i+1}\rightarrow \mathbb{C}$ and $\phi$: $\mathbb{C}^{2i+2}\rightarrow \mathbb{C}$ be any pseudo-Lipschitz functions of order 2. For $i = 1, 2, \cdots$, it is almost sure that
    \begin{align}
    \begin{aligned}
    \label{theoremconverge1}
        \lim _{(T+1)N \rightarrow \infty} &\frac{\sum_{t=1}^{(T+1)N} \psi\left(h_{t,m}^{(1)}, \ldots, h_{t,m}^{(i)}, \tilde{h}_{t,m}^{(1)}, \ldots, \tilde{h}_{t,m}^{(i)}, h_{t,m}\right)}{(T+1)N}\\
        &=\mathbb{E}\left[\psi\left(H_{m}^{(1)}, \ldots, H_{m}^{(i)}, \tilde{H}_{m}^{(1)}, \ldots, \tilde{H}_{m}^{(i)}, H_{m}\right)\right],
    \end{aligned}
    \end{align}
    \begin{align}
    \begin{aligned}
    \label{theoremconverge2}
        \lim _{L \rightarrow \infty} &\frac{1}{L} \sum_{l=1}^{L} \phi\left(r_{l,m}^{(1)}, \ldots, r_{l,m}^{(i)}, s_{l,m}^{(1)}, \ldots, s_{l,m}^{(i+1)}, y_{l,m}\right)\\
        &=\mathbb{E}\left[\phi\left(R_{m}^{(1)}, \ldots, R_{m}^{(i)}, S_{m}^{(1)}, \ldots, S_{m}^{(i+1)}, Y_{m}\right)\right],
    \end{aligned}
    \end{align}
    where $H_{m}$, $H^{(i)}_{m}$, $\tilde{H}^{(i)}_{m}$, $S^{(i)}_{m}$, $R^{(i)}_{m}$, and $Y_{m}$ are defined in the state evolution of the FPAMP-based algorithm as follows:
    \begin{align}
    \left[G_{m},R_{m}^{(1)}, \ldots, R_{m}^{(i-1)}\right]^{\mathrm{T}} \sim \mathcal{CN}(\mathbf{0},\bar{\mathbf{\Sigma}}_{m}^{(i)}),
    \label{SE1}
    \end{align}
    \begin{align}
    S_{m}^{(i)}=g^{(i)}(R_{m}^{(1)}, \ldots, R_{m}^{(i-1)},Y_{m}),
    \label{SE2}
    \end{align}
    \begin{align}   \left[H_{m}^{(1)},\cdots,H_{m}^{(i)}\right]^{\mathrm{T}}=\bar{\boldsymbol{\mu}}_{m}^{(i)}H_{m} +\left[W_{m}^{(1)},\cdots,W_{m}^{(i)}\right]^{\mathrm{T}},
    \label{SE3}
    \end{align}
    \begin{align}
        \tilde{H}_{m}^{(i)}=f^{(i)}(H_{m}^{(1)},\cdots,H_{m}^{(i)}).
    \label{SE4}
    \end{align}
    In particular, $\mathbf{h}_{m} \stackrel{W_2}{\longrightarrow} H_m$, $G_m \sim \mathcal{CN}(0,\bar{\kappa}_2\mathbb{E}[H_{m}^2])$, and $Y_m=G_m+N_m$ with $\mathbf{n}_{m} \stackrel{W_2}{\longrightarrow} N_m$. Besides, $\bar{\boldsymbol{\mu}}_{m}^{(i)}=[\bar{\mu}_{m}^{(1)},\cdots,\bar{\mu}_{m}^{(i)}]^{\mathrm{T}}$ and $\Big[W_{m}^{(1)},\cdots,W_{m}^{(i)}\Big] \sim \mathcal{CN}(\mathbf{0},\bar{\mathbf{\Omega}}_{m}^{(i)})$, which are independent of $H_{m}$. Equivalently, as $L,N \rightarrow \infty$, the joint empirical distributions of $(\mathbf{h}_{m}^{(1)},\cdots,\mathbf{h}_{m}^{(i)},\tilde{\mathbf{h}}_{m}^{(1)},\cdots,\tilde{\mathbf{h}}_{m}^{(i)},$ $\mathbf{h}_{m})$ and $(\mathbf{r}_{m}^{(1)},\cdots,\mathbf{r}_{m}^{(i)}$, $\mathbf{s}_{m}^{(1)},\cdots,\mathbf{s}_{m}^{(i+1)},\mathbf{y}_{m})$ converge almost surely in Wasserstein-2 distance to $(H_{m}^{(1)}, \ldots,$ $ H_{m}^{(i)}, \tilde{H}_{m}^{(1)}, \ldots,\tilde{H}_{m}^{(i)}, H_{m})$ and $(R_{m}^{(1)}, \ldots, R_{m}^{(i)}, S_{m}^{(1)}, \ldots, S_{m}^{(i+1)}, Y_{m})$, respectively.
\end{theorem}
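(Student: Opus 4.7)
The plan is to establish Theorem~\ref{theoremconverge} by a joint induction on the iteration index $i$, adapting the state-evolution analysis developed for AMP with rotationally invariant sensing matrices (e.g.\ the framework of Fan and its rectangular extension in \cite{marco2022}) to the pilot matrix $\mathbf{P}$ here. The induction hypothesis at step $i$ carries two conclusions in parallel: (a) the joint empirical distribution of the iterates listed in (\ref{theoremconverge1}) and (\ref{theoremconverge2}) converges in $W_{2}$ to the Gaussian vectors prescribed by (\ref{SE1})--(\ref{SE4}), and (b) the memory coefficients $\{\alpha^{(i)}_{m,j}\}$ and $\{\gamma^{(i)}_{m,j}\}$ computed from $\mathbf{M}^{(i+1)}_{m,\alpha}$ and $\mathbf{M}^{(i+1)}_{m,\gamma}$ are the correct deterministic limits of the projection coefficients appearing in the conditional distribution of the next iterate.

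To execute this induction, I would first invoke the bi-rotational invariance of $\mathbf{P}$ and write $\mathbf{P}=\mathbf{U}\mathbf{D}\mathbf{V}^{\mathrm{H}}$ with Haar-distributed $\mathbf{U}$, $\mathbf{V}$ independent of $\mathbf{D}$, $\mathbf{h}_{m}$, and $\mathbf{n}_{m}$. Applying Bolthausen's conditioning technique to the $\sigma$-algebra $\mathcal{F}_{i}$ generated by the iterates produced through step $i$, the conditional law of $\mathbf{U},\mathbf{V}$ becomes Haar on the subspaces orthogonal to the linear constraints imposed by $\{\mathbf{s}^{(1)}_{m},\ldots,\mathbf{s}^{(i)}_{m}\}$ and $\{\tilde{\mathbf{h}}^{(1)}_{m},\ldots,\tilde{\mathbf{h}}^{(i)}_{m}\}$. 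Decomposing $\mathbf{P}^{\mathrm{H}}\mathbf{s}^{(i)}_{m}$ (respectively $\mathbf{P}\tilde{\mathbf{h}}^{(i)}_{m}$) into its projection onto the span of previous iterates plus a residual lying in the orthogonal complement, the residual becomes an explicit Gaussian vector in the large-system limit whose variance is controlled by traces of polynomials in $\mathbf{D}\mathbf{D}^{\mathrm{H}}$. The key algebraic step is to verify that the subtractions $\sum_{j=1}^{i-1}\gamma^{(i)}_{m,j}\tilde{\mathbf{h}}^{(j)}_{m}$ and $\sum_{j=1}^{i}\alpha^{(i)}_{m,j}\mathbf{s}^{(j)}_{m}$ in (\ref{fpamp1}) and (\ref{fpamp3}) exactly remove the deterministic projection part, so only the Gaussian residual survives and drives the state evolution recursion for $\bar{\boldsymbol{\Sigma}}^{(i)}_{m}$, $\bar{\boldsymbol{\Omega}}^{(i)}_{m}$, and $\bar{\boldsymbol{\mu}}^{(i)}_{m}$.

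The principal obstacle is showing that the coefficients defined in (\ref{Malpha})--(\ref{gammaupdate}) coincide with these deterministic projection coefficients in the limit, which requires an honest rectangular free-probability computation rather than a routine application of standard free cumulants. Concretely, for any test polynomial in $\mathbf{P},\mathbf{P}^{\mathrm{H}}$ sandwiched between vectors built from the preceding iterates, I would expand the normalized trace into a sum over non-crossing partitions using the moment-cumulant relation of Definition~4, exploiting the asymptotic freeness (from bi-rotational invariance) of $\mathbf{P}\mathbf{P}^{\mathrm{H}}$ and the empirical covariance structures $\langle\partial_{k}\tilde{\mathbf{h}}^{(i)}_{m}\rangle$, $\langle\partial_{k}\mathbf{s}^{(i)}_{m}\rangle$. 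The series inversion (\ref{freecumulant}) is precisely the generating-function identity that repackages this non-crossing sum into the matrix powers $(\boldsymbol{\Phi}\boldsymbol{\Psi})^{j}$ and $(\boldsymbol{\Psi}\boldsymbol{\Phi})^{j}$, so identifying the last row of $\mathbf{M}^{(i+1)}_{m,\alpha/\gamma}$ with the projection coefficients reduces to matching the two expansions term by term. The rectangular aspect ratio $L/((T+1)N)$ enters through the asymmetry of the moment-cumulant formula, which is why $\mathbf{M}^{(i+1)}_{m,\gamma}$ carries the extra prefactor in (\ref{Mgamma}).

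Finally, to upgrade convergence of test polynomials to $W_{2}$-convergence against arbitrary pseudo-Lipschitz functions of order $2$, I would establish uniform $L^{p}$ moment bounds on the iterates using the Lipschitz assumption on $f^{(i)},g^{(i)}$ together with standard concentration for Haar matrices, and then apply a density/truncation argument in the Wasserstein-2 space. Propagating these through the induction closes the recursion and yields (\ref{theoremconverge1})--(\ref{theoremconverge2}). The hard part is really the combinatorial identification in the previous paragraph; once the rectangular free-cumulant bookkeeping is matched to the conditioning computation, the remaining steps are analogous to the standard AMP state-evolution proofs.
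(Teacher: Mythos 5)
Your plan takes a genuinely different route from the paper's. The paper does not redo the conditioning argument at all: it introduces an auxiliary ``general recursion'' (\ref{generalrecursion1})--(\ref{generalrecursion4}) whose nonlinearities $\tilde{f}^{(i)},\tilde{g}^{(i)}$ absorb the side information $\mathbf{h}_m,\mathbf{n}_m$, imports its state evolution wholesale as Theorem \ref{theoremgeneral} by citing Theorem 5.3 of \cite{zfan2020}, and then closes the argument with two comparatively soft steps: Lemma \ref{lemmaSE=} (an induction showing the state-evolution parameters of the general recursion coincide with those of the FPAMP recursion) and Lemma \ref{closelemma} (a Cauchy--Schwarz / pseudo-Lipschitz estimate showing the FPAMP iterates are asymptotically close in normalized $\ell_2$ to the general-recursion iterates, with the underlying induction deferred to Appendix D.4 of \cite{marco2022}). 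You instead propose to reprove the master theorem directly---Bolthausen conditioning on the Haar factors, identification of the Onsager coefficients via the rectangular moment--cumulant combinatorics, then a truncation argument for $W_2$. That is essentially the proof of the result the paper cites, so it is a legitimate but much heavier path, and it buys generality the paper does not need.

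Within your path there are two concrete gaps. First, the conditioning argument as you describe it applies to a recursion with deterministic, iteration-indexed nonlinearities, but the algorithm actually run is data-dependent throughout: the Onsager coefficients are built from empirical derivative averages $\langle\partial_k\tilde{\mathbf{h}}_m^{(i)}\rangle$, $\langle\partial_k\mathbf{s}_m^{(i)}\rangle$, the denoisers $f^{(i)},g^{(i)}$ use empirically estimated $\bar{\boldsymbol{\mu}}_m^{(i)}$, $\bar{\mathbf{\Omega}}_m^{(i)}$, $\bar{\mathbf{\Sigma}}_m^{(i)}$ and cumulants estimated from $\mathbf{P}$, and the prior sparsity ratios $\omega_{n,t}^{(i)}$ are updated across antennas. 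Your induction hypothesis (b) asserts these random quantities converge to the correct deterministic limits but does not say how that is established; note in particular that convergence of the empirical derivative averages is not a consequence of $W_2$-convergence tested against pseudo-Lipschitz functions, since derivatives of Lipschitz denoisers need not be continuous. The paper's device of an auxiliary recursion with fixed nonlinearities plus a closeness lemma is precisely what resolves this, and some version of that reduction is unavoidable in your argument as well. Second, you invoke bi-rotational invariance of $\mathbf{P}$ as a fact; for the zero-padded, shift-structured pilot matrix of this paper that is an additional modeling assumption (the paper only assumes weak convergence of the singular value distribution and tacitly inherits the invariance hypothesis of the cited framework), so it should be stated as a hypothesis rather than asserted as a property of $\mathbf{P}$.
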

\begin{proof}[Proof Sketch]
    Due to the limited space, we highlight the key idea of the proof herein, which is based on the general recursion and its state evolution, which characterizes the common structure of the FPAMP-based algorithm. We first prove that the joint empirical distributions of some variables in the general recursion converge to the multivariate Gaussian distribution in its state evolution. Then, it can be shown via an induction that the state evolution parameters of the general recursion match those of the FPAMP-based algorithm, and the FPAMP iterates are close to the general recursion. Therefore, the joint empirical distributions of some variables in the FPAMP-based algorithm also converge to the multivariate Gaussian distribution in its state evolution. Detailed derivations are provided in Appendix \ref{appendixa}.
\end{proof}

\subsubsection{Denoisers} Based on the state evolution of the FPAMP-based algorithm in (\ref{SE1})$\sim$(\ref{SE4}), $\Big[(\mathbf{h}_{m}^{(1)}-\bar{\mu}_{m}^{(1)}\mathbf{h}_{m})^{\mathrm{T}},\!\cdots\!,$ $(\mathbf{h}_{m}^{(i)}-\bar{\mu}_{m}^{(i)}\mathbf{h}_{m})^{\mathrm{T}}\Big]^{\mathrm{T}}$ and $\Big[\mathbf{G}_{m}^{\mathrm{T}},(\mathbf{r}_{m}^{(1)})^{\mathrm{T}},\!\cdots,\!(\mathbf{r}_{m}^{(i)})^{\mathrm{T}}\Big]^{\mathrm{T}}$ converge in the large system limit, i.e., $L,N \rightarrow \infty$, to two independent zero-mean multi-variate complex Gaussian distributions with covariance matrices $\bar{\mathbf{\Omega}}_{m}^{(i)}$ and $\bar{\mathbf{\Sigma}}_{m}^{(i+1)}$, respectively \cite{zfan2020,marco2022}. This property can be used to design the denoisers, with which, we only need to track the mean vector $\bar{\boldsymbol{\mu}}_{m}^{(i)}$ and covariance matrices $\bar{\mathbf{\Omega}}_{m}^{(i)}$, $\bar{\mathbf{\Sigma}}_{m}^{(i+1)}$.

First, we define auxiliary matrices $\bar{\mathbf{\Psi}}_{m}^{(i+1)}$, $\bar{\mathbf{\Phi}}_{m}^{(i+1)}$, $\bar{\mathbf{\Gamma}}_{m}^{(i+1)}$, $\bar{\mathbf{\Delta}}_{m}^{(i+1)}$ as follows:
\begin{align}
\bar{\mathbf{\Psi}}_{m}^{(i+1)}\!=\!\left[\begin{array}{ccccc}
0 & 0 & \ldots & 0 & 0 \\
0 & \mathbb{E}[\partial_1 \tilde{H}^{(1)}_{m}] & 0 & \ldots & 0 \\
0 & \mathbb{E}[\partial_1 \tilde{H}^{(2)}_{m}] & \mathbb{E}[\partial_2 \tilde{H}^{(2)}_{m}] & \ldots & 0 \\
\vdots & \vdots & \vdots & \ddots & \vdots \\
0 & \mathbb{E}[\partial_1 \tilde{H}^{(i)}_{m}] & \mathbb{E}[\partial_2 \tilde{H}^{(i)}_{m}] & \ldots & \mathbb{E}[\partial_i \tilde{H}^{(i)}_{m}]
\end{array}\right],
\label{psibar}
\end{align}
\begin{align}
\bar{\mathbf{\Phi}}_{m}^{(i+1)}\!=\!\left[\begin{array}{ccccc}
0 & 0 & \ldots & 0 & 0 \\
\mathbb{E}[\partial_{G_{m}} S^{(1)}_{m}] & 0 & 0 & \ldots & 0 \\
\mathbb{E}[\partial_{G_{m}} S^{(2)}_{m}] & \mathbb{E}[\partial_{1} S^{(2)}_{m}] & 0 & \ldots & 0 \\
\vdots & \vdots & \ddots & \vdots & \vdots \\
\mathbb{E}[\partial_{G_{m}} S^{(i)}_{m}] & \mathbb{E}[\partial_{1} S^{(i)}_{m}] & \ldots & \mathbb{E}[\partial_{i-1} S^{(i)}_{m}] & 0
\end{array}\right],
\label{phibar}
\end{align}
\begin{align}
\bar{\mathbf{\Gamma}}_{m}^{(i+1)}\!=\!\left[\begin{array}{cccc}
\mathbb{E}[H_{m}^2] & \mathbb{E}[H_{m}\tilde{H}_{m}^{(1)}] & \ldots & \mathbb{E}[H_{m}\tilde{H}_{m}^{(i)}] \\
\mathbb{E}[H_{m}\tilde{H}_{m}^{(1)}] & \mathbb{E}[(\tilde{H}_{m}^{(1)})^2] & \ldots & \mathbb{E}[\tilde{H}_{m}^{(1)}\tilde{H}_{m}^{(i)}] \\
\mathbb{E}[H_{m}\tilde{H}_{m}^{(2)}] & \mathbb{E}[\tilde{H}_{m}^{(1)}\tilde{H}_{m}^{(2)}] & \ldots & \mathbb{E}[\tilde{H}_{m}^{(2)}\tilde{H}_{m}^{(i)}] \\
\vdots & \vdots & \ddots & \vdots \\
\mathbb{E}[H_{m}\tilde{H}_{m}^{(i)}] & \mathbb{E}[\tilde{H}_{m}^{(1)}\tilde{H}_{m}^{(i)}]  & \ldots & \mathbb{E}[(\tilde{H}_{m}^{(i)})^2]
\end{array}\right],
\label{Gammabar}
\end{align}
\begin{align}
\bar{\mathbf{\Delta}}_{m}^{(i+1)}\!=\!\left[\begin{array}{ccccc}
0 & 0 & \ldots & 0 & 0 \\
0 & \mathbb{E}[(S_{m}^{(1)})^2] & \mathbb{E}[S_{m}^{(1)}S_{m}^{(2)}] & \ldots & \mathbb{E}[S_{m}^{(1)}S_{m}^{(i)}] \\
0 & \mathbb{E}[S_{m}^{(1)}S_{m}^{(2)}] & \mathbb{E}[(S_{m}^{(2)})^2] & \ldots & \mathbb{E}[S_{m}^{(2)}S_{m}^{(i)}] \\
\vdots & \vdots & \vdots & \ddots & \vdots \\
0 & \mathbb{E}[S_{m}^{(1)}S_{m}^{(i)}] & \mathbb{E}[S_{m}^{(2)}S_{m}^{(i)}] & \ldots & \mathbb{E}[(S_{m}^{(i)})^2]
\end{array}\right],
\label{Deltabar}
\end{align}
where $\partial_k \tilde{H}_{m}^{(i)}$, $\partial_{G_{m}} S_{m}^{(i)}$, and $\partial_{k} S_{m}^{(i)}$ denote the partial derivatives $\partial_{H_{m}^{(k)}} f^{(i)}(H_{m}^{(1)},\cdots,H_{m}^{(i)})$, $\partial_{G_{m}}g^{(i)}$ $(R_{m}^{(1)},\cdots,R_{m}^{(i-1)},Y_{m})$, and $\partial_{R_{m}^{(k)}}g^{(i)}(R_{m}^{(1)},\cdots,R_{m}^{(i-1)},Y_{m})$. Starting with $\bar{\mathbf{\Sigma}}_{m}^{(1)}\!=\!\bar{\kappa}_2\mathbb{E}[H_{m}^2]$, $\bar{\boldsymbol{\mu}}_{m}^{(1)}=\frac{L\bar{\kappa}_2\mathbb{E}[\partial_{G_{m}} g^{(1)}(Y_{m})]}{(T+1)N}$, and $\bar{\mathbf{\Omega}}_{m}^{(1)}=\frac{L\left(\bar{\kappa}_2\mathbb{E}[(g^{(1)}(Y_{m}))^2]+\bar{\kappa}_4\mathbb{E}[H_{m}^2](\mathbb{E}[\partial_{G_{m}}g^{(1)}(Y_{m})])^2\right)}{(T+1)N}$, the covariance matrix $\bar{\mathbf{\Sigma}}_{m}^{(i+1)}$ is updated as follows:
\begin{align}
\bar{\mathbf{\Sigma}}_{m}^{(i+1)} = \sum_{j=0}^{2i+1}\bar{\kappa}_{2(j+1)}\mathbf{\Xi}_{m,j}^{(i+1)},
\label{Sigmaupdate}
\end{align}
where $\mathbf{\Xi}_{m,0}^{(i+1)}=\bar{\mathbf{\Gamma}}_{m}^{(i+1)}$, and for $j \geq 1$,
\begin{align}
\begin{aligned}
\mathbf{\Xi}_{m,j}^{(i+1)} & =\sum_{k=0}^j\left(\bar{\mathbf{\Psi}}_{m}^{(i+1)} \bar{\mathbf{\Phi}}_{m}^{(i+1)}\right)^{k} \bar{\mathbf{\Gamma}}_{m}^{(i+1)}\left(\left(\bar{\mathbf{\Psi}}_{m}^{(i+1)} \bar{\mathbf{\Phi}}_{m}^{(i+1)}\right)^{\mathrm{H}}\right)^{j-k} \\
& +\sum_{k=0}^{j-1}\left(\bar{\mathbf{\Psi}}_{m}^{(i+1)} \bar{\mathbf{\Phi}}_{m}^{(i+1)}\right)^k \bar{\mathbf{\Psi}}_{m}^{(i+1)} \bar{\mathbf{\Delta}}_{m}^{(i+1)} (\bar{\mathbf{\Psi}}_{m}^{(i+1)})^{\mathrm{H}}\\
&\quad \quad \cdot\left(\left(\bar{\mathbf{\Psi}}_{m}^{(i+1)} \bar{\mathbf{\Phi}}_{m}^{(i+1)}\right)^{\mathrm{H}}\right)^{j-k-1}.
\end{aligned}
\label{Sigmaupdate1}
\end{align}
Next, we define a symmetric matrix $\check{\mathbf{\Omega}}_{m}^{(i+2)} \in \mathbb{C}^{(i+2) \times (i+2)}$ by padding one row and one column of zeros before the first row and first column of $\bar{\mathbf{\Omega}}_{m}^{(i+1)} \in \mathbb{C}^{(i+1) \times (i+1)}$. In particular, we can compute $\check{\mathbf{\Omega}}_{m}^{(i+2)}$ as follows:
\begin{align}
    \check{\mathbf{\Omega}}_{m}^{(i+2)} = \frac{L}{(T+1)N}\sum_{j=0}^{2(i+1)}\bar{\kappa}_{2(j+1)}\mathbf{\Theta}_{m,j}^{(i+2)},
\label{Omegaupdate1}
\end{align}
where $\mathbf{\Theta}_{m,0}^{(i+2)}=\bar{\mathbf{\Delta}}_{m}^{(i+2)}$, and for $j\geq 1$,
\begin{align}
\begin{aligned}
\mathbf{\Theta}_{m,j}^{(i+2)}  &=\sum_{k=0}^j\left(\bar{\mathbf{\Phi}}_{m}^{(i+2)}\bar{\mathbf{\Psi}}_{m}^{(i+2)} \right)^{k} \bar{\mathbf{\Delta}}_{m}^{(i+2)}\left(\left(\bar{\mathbf{\Phi}}_{m}^{(i+2)}\bar{\mathbf{\Psi}}_{m}^{(i+2)} \right)^{\mathrm{H}}\right)^{j-k} \\
& +\sum_{k=0}^{j-1}\left(\bar{\mathbf{\Phi}}_{m}^{(i+2)}\bar{\mathbf{\Psi}}_{m}^{(i+2)} \right)^k \bar{\mathbf{\Phi}}_{m}^{(i+2)} \bar{\mathbf{\Gamma}}_{m}^{(i+2)} (\bar{\mathbf{\Phi}}_{m}^{(i+2)})^{\mathrm{H}}\\
&\quad \quad \cdot \left(\left(\bar{\mathbf{\Phi}}_{m}^{(i+2)}\bar{\mathbf{\Psi}}_{m}^{(i+2)} \right)^{\mathrm{H}}\right)^{j-k-1}.
\label{Omegaupdate2}
\end{aligned}
\end{align}
Therefore, $\bar{\mathbf{\Omega}}_{m}^{(i+1)}$ is expressed as follows:
\begin{align}
    \bar{\mathbf{\Omega}}_{m}^{(i+1)} = [\check{\mathbf{\Omega}}_{m}^{(i+2)}]_{2:i+2,2:i+2}.
\label{Omegaupdate3}
\end{align}
Finally, we evaluate the mean vector $\bar{\boldsymbol{\mu}}_{m}^{(i+1)}$ recursively with each element given as follows: 
\begin{align}
    \bar{\mu}_{m}^{(i+1)}=[\bar{\mathbf{M}}^{(i+2)}_{m,\gamma}]_{i+2,1},
\label{muupdate}
\end{align}
where $\bar{\mathbf{M}}^{(i+2)}_{m,\gamma}$ is obtained as
\begin{align}
\bar{\mathbf{M}}^{(i+2)}_{m,\gamma}=\frac{L}{(T+1)N} \sum_{j=0}^{i+1} \bar{\kappa}_{2(j+1)} \bar{\mathbf{\Phi}}_{m}^{(i+2)}\left(\bar{\mathbf{\Psi}}_{m}^{(i+2)} \bar{\mathbf{\Phi}}_{m}^{(i+2)}\right)^j.
\label{Mbarupdate}
\end{align}

Note that the formulas of $\mathbf{\Theta}_{m,j}^{(i+2)}$ and $\bar{\mathbf{M}}^{(i+2)}_{m,\gamma}$ in (\ref{Omegaupdate2}) and (\ref{Mbarupdate}) involve matrices $\bar{\mathbf{\Psi}}_{m}^{(i+2)}$ and $\bar{\mathbf{\Gamma}}_{m}^{(i+2)}$, which have not been computed yet. However, the last rows and columns of $\bar{\mathbf{\Psi}}_{m}^{(i+2)}$ and $\bar{\mathbf{\Gamma}}_{m}^{(i+2)}$ do not influence the calculation since these rows and columns are zeroed out according to the forms of $\bar{\mathbf{\Phi}}_{m}^{(i+2)}$ and $\bar{\mathbf{\Delta}}_{m}^{(i+2)}$. In other words, we can just use the upper-left submatrices of $\bar{\mathbf{\Psi}}_{m}^{(i+2)}$ and $\bar{\mathbf{\Gamma}}_{m}^{(i+2)}$, i.e., $\bar{\mathbf{\Psi}}_{m}^{(i+1)}$ and $\bar{\mathbf{\Gamma}}_{m}^{(i+1)}$, to obtain $\mathbf{\Theta}_{m,j}^{(i+2)}$ and $\bar{\mathbf{M}}^{(i+2)}_{m,\gamma}$. In addition, all the expectations in these matrices are calculated via empirical averaging such that $\mathbf{\Psi}_{m}^{(i)}=\bar{\mathbf{\Psi}}_{m}^{(i)}$, $\mathbf{\Phi}_{m}^{(i)}=\bar{\mathbf{\Phi}}_{m}^{(i)}$, and $\mathbb{E}[\tilde{H}^{(i)}_{m}\tilde{H}^{(j)}_{m}]$, $\mathbb{E}[H_{m}\tilde{H}^{(i)}_{m}]$, $\mathbb{E}[H_{m}^2]$, and $\mathbb{E}[S^{(i)}_{m}S^{(j)}_{m}]$ can be obtained as $\frac{(\tilde{\mathbf{h}}_{m}^{(i)})^{\mathrm{H}}\tilde{\mathbf{h}}_{m}^{(j)}}{(T+1)N}$, $\frac{||\tilde{\mathbf{h}}^{(i)}_{m}||_2^2}{(T+1)N}$, $\frac{\lambda}{N}\sum_{n=1}^{N}\beta_{n}$, and $\frac{(\mathbf{s}_{m}^{(i)})^{\mathrm{H}}\mathbf{s}_{m}^{(j)}}{L}$, respectively. Besides, $\{\bar{\kappa}_{2k}\}_{k\geq1}$ directly uses the values of $\{\kappa_{2k}\}_{k\geq1}$.

Similar to the AMP algorithm, we adopt the MMSE denoisers, while intermediate estimates of all the preceding iterates instead of only the most recent one are utilized. Specifically, the denoisers are expressed as follows:
\begin{align}
    f^{(i)}(\mathbf{h}_{m}^{(1)},\cdots,\mathbf{h}_{m}^{(i)})=\mathbb{E}[\mathbf{h}_{m}|\mathbf{h}_{m}^{(1)},\cdots,\mathbf{h}_{m}^{(i)}],
\label{fpampffunctiondefine}
\end{align}
\begin{align}
\begin{aligned}
    g^{(i+1)}(\mathbf{r}_{m}^{(1)},\cdots,\mathbf{r}_{m}^{(i)},\mathbf{y}_{m})
    &=\mathbb{E}[\mathbf{G}_{m}|\mathbf{r}_{m}^{(1)},\cdots,\mathbf{r}_{m}^{(i)},\mathbf{y}_{m}]\\
    &-\mathbb{E}[\mathbf{G}_{m}|\mathbf{r}_{m}^{(1)},\cdots,\mathbf{r}_{m}^{(i)}].
\label{fpampgfunctiondefine}
\end{aligned}
\end{align}

Regarding the denoiser in (\ref{fpampffunctiondefine}), we need to compute the posterior distribution of $\mathbf{h}_{m}$. Since the likelihood function can be obtained in (\ref{SE1}) and the prior information is available in (\ref{prior}), the posterior distribution is calculated by using the Bayes' theorem as follows:
\begin{align}
\begin{aligned}
&\quad p\left(\mathbf{h}_{n,m}|\mathbf{h}_{n,m}^{(1)},\cdots,\mathbf{h}_{n,m}^{(i)}\right)\\
&=\left(1-\sum_{t=0}^{T}\pi_{n,t,m}^{(i)}\right)\prod_{t=0}^{T}\delta_{0}\left(h_{n, t,m}\right)\\
&+\sum_{t=0}^{T}\left(\pi_{n,t,m}^{(i)}\mathcal{C N}\left(h_{n,t,m}; C_{n,t,m}^{(i)}, D_{n,t,m}^{(i)}\right)\prod_{t^{\prime}\neq t}\delta_0(h_{n,t^{\prime},m})\right),
\end{aligned}
\label{fpampposterior}
\end{align}
where 
\begin{align}
    C_{n,t,m}^{(i)}=\frac{E_{n,t,m}^{(i)}}{F_{n,m}^{(i)}},
\label{fpampc}
\end{align}
\begin{align}
    D_{n,t,m}^{(i)}=\frac{1}{F_{n,m}^{(i)}},
\label{fpampd}
\end{align}
\begin{align}
    E_{n,t,m}^{(i)}=(\bar{\boldsymbol{\mu}}_{m}^{(i)})^{\mathrm{H}}(\bar{\mathbf{\Omega}}_{m}^{(i)})^{-1}\left[h_{n,t,m}^{(1)},\cdots,h_{n,t,m}^{(i)}\right]^{\mathrm{T}},
\label{fpampe}
\end{align}
\begin{align}
    F_{n,m}^{(i)}=1/\beta_{n}+(\bar{\boldsymbol{\mu}}_{m}^{(i)})^{\mathrm{H}}(\bar{\mathbf{\Omega}}_{m}^{(i)})^{-1}\bar{\boldsymbol{\mu}}_{m}^{(i)},
\label{fpampf}
\end{align}
\begin{align}
\pi_{n,t,m}^{(i)}=\frac{e^{\frac{\left(E_{n,t,m}^{(i)}\right)^2}{F_{n,m}^{(i)}}}}{\beta_n F_{n,m}^{(i)}\frac{1-\sum_{t^{\prime}=0}^{T}\lambda_{n,t^{\prime},m}}{\lambda_{n,t,m}}+\sum_{t^{\prime}=0}^{T}\frac{\lambda_{n,t^{\prime},m}}{\lambda_{n,t,m}}e^{\frac{\left(E_{n,t^{\prime},m}^{(i)}\right)^2}{F_{n,m}^{(i)}}}}.
\label{fpamppostpara}
\end{align}
Therefore, each element in $\tilde{\mathbf{h}}_{m}^{(i)}$ of (\ref{fpamp2}) in $i$-th iteration, i.e, $\tilde{h}_{n,t,m}^{(i)}$, can be obtained as follows:
\begin{align}
\tilde{h}_{n,t,m}^{(i)} = [f^{(i)}(\mathbf{h}_{n,m}^{(1)},\cdots,\mathbf{h}_{n,m}^{(i)})]_{t}=\pi_{n,t,m}^{(i)}C_{n,t,m}^{(i)}.
\label{ffunction}
\end{align}
After some manipulations, $\partial_{h_{n,t,m}^{(k)}} [f^{(i)}(\mathbf{h}_{n,m}^{(1)},\cdots,\mathbf{h}_{n,m}^{(i)})]_t$ is derived as follows:
\begin{align}
\begin{aligned}
    &\partial_{h_{n,t,m}^{(k)}} [f^{(i)}(\mathbf{h}_{n,m}^{(1)},\cdots,\mathbf{h}_{n,m}^{(i)})]_t = \\
    &\pi_{n,t,m}^{(i)}\left[\frac{2E_{n,t,m}^{(i)}}{F_{n,m}^{(i)}}C_{n,t,m}^{(i)}\left(1-\pi_{n,t,m}^{(i)}\right)+1\right](\bar{\boldsymbol{\mu}}_{m}^{(i)})^{\mathrm{H}}(\bar{\mathbf{\Omega}}_{m}^{(i)})^{-1}\mathbf{e}_{k},
\end{aligned}
\label{partialf}
\end{align}
where $\mathbf{e}_{k} \in \mathbb{R}^{i \times 1}$ represents the $k$-th column of $\mathbf{I}_{i}$.

Regarding the denoiser in (\ref{fpampgfunctiondefine}), since (\ref{SMVmodel}) is a linear model, we set $g^{(1)}(\mathbf{y}_{m})=\mathbf{y}_{m}$ and consequently $\partial_{G_{l,m}} g^{(1)}(y_{l,m}) = 1$ for $i=0$. For $i>0$, based on the state evolution, we have $[G_m,R_{m}^{(1)},\cdots,R_{m}^{(i)}]^{\mathrm{T}} \sim \mathcal{CN}(\mathbf{0},\bar{\mathbf{\Sigma}}_{m}^{(i+1)})$, and the second conditional expectation in (\ref{fpampgfunctiondefine}) is given as follows:
\begin{align}
\begin{aligned}
    &\mathbb{E}[G_{l,m}|r_{l,m}^{(1)},\cdots,r_{l,m}^{(i)}] \\
    &= [\bar{\mathbf{\Sigma}}_{m}^{(i+1)}]_{1,2:i+1}([\bar{\mathbf{\Sigma}}_{m}^{(i+1)}]_{2:i+1,2:i+1})^{-1}\left[r_{l,m}^{(1)},\cdots,r_{l,m}^{(i)}\right]^{\mathrm{T}}.
\label{gsecond}
\end{aligned}
\end{align}
By further incorporating the received pilot signal, we have $[G_m,R_{m}^{(1)},\cdots,R_{m}^{(i)},Y_m]^{\mathrm{T}} \sim \mathcal{CN}(\mathbf{0},\bar{\mathbf{S}}_{m}^{(i+2)})$, where
\begin{align}
    \bar{\mathbf{S}}_{m}^{(i+2)}= \left[\begin{array}{cc}
\bar{\mathbf{\Sigma}}_{m}^{(i+1)} & \left[\bar{\mathbf{\Sigma}}_{m}^{(i+1)}\right]_{1: i+1,1} \\
\left[\bar{\mathbf{\Sigma}}_{m}^{(i+1)}\right]_{1,1: i+1} & \kappa_2\mathbb{E}[H_{m}^2]+\frac{\sigma^2}{\rho L}
\end{array}\right].
\label{Sbar}
\end{align}
Thus, the first conditional expectation in (\ref{fpampgfunctiondefine}) is expressed as follows:
\begin{align}
\begin{aligned}
    &\mathbb{E}[G_{l,m}|r_{l,m}^{(1)},\cdots,r_{l,m}^{(i)}, y_{l,m}] \\
    &= [\bar{\mathbf{S}}_{m}^{(i+2)}]_{1,2:i+2}([\bar{\mathbf{S}}_{m}^{(i+2)}]_{2:i+2,2:i+2})^{-1}\left[r_{l,m}^{(1)},\cdots,r_{l,m}^{(i)},y_{l,m}\right]^{\mathrm{T}}.
\label{gfirst}
\end{aligned}
\end{align}
By combining the results in (\ref{gsecond}) and (\ref{gfirst}), we obtain
\begin{align}
\begin{aligned}
    &g^{(i+1)}(r_{l,m}^{(1)},\cdots,r_{l,m}^{(i)},y_{l,m})\\
    &=[\bar{\mathbf{S}}_{m}^{(i+2)}]_{1,2:i+2}([\bar{\mathbf{S}}_{m}^{(i+2)}]_{2:i+2,2:i+2})^{-1}\left[r_{l,m}^{(1)},\cdots,r_{l,m}^{(i)},y_{l,m}\right]^{\mathrm{T}}\\
    &-[\bar{\mathbf{\Sigma}}_{m}^{(i+1)}]_{1,2:i+1}([\bar{\mathbf{\Sigma}}_{m}^{(i+1)}]_{2:i+1,2:i+1})^{-1}\left[r_{l,m}^{(1)},\cdots,r_{l,m}^{(i)}\right]^{\mathrm{T}}.
\end{aligned}
\label{gfunction}
\end{align}
Accordingly, the partial derivatives of $g^{(i+1)}$ can be calculated as follows:
\begin{align}
\begin{aligned}
    &\partial_{G_{l,m}} g^{(i+1)}(r_{l,m}^{(1)},\cdots,r_{l,m}^{(i)},y_{l,m})\\
    &= [\bar{\mathbf{S}}_{m}^{(i+2)}]_{1,2:i+2}([\bar{\mathbf{S}}_{m}^{(i+2)}]_{2:i+2,2:i+2})^{-1}\mathbf{e}_{i+1},
\label{partialgg}
\end{aligned}
\end{align}
\begin{align}
\begin{aligned}
    &\partial_{r_{l,m}^{(k)}} g^{(i+1)}(r_{l,m}^{(1)},\cdots,r_{l,m}^{(i)},y_{l,m})\\
    &= [\bar{\mathbf{S}}_{m}^{(i+2)}]_{1,2:i+2}([\bar{\mathbf{S}}_{m}^{(i+2)}]_{2:i+2,2:i+2})^{-1}\left[\mathbf{e}_{k}^{\mathrm{T}},0\right]^{\mathrm{T}}\\
    &-[\bar{\mathbf{\Sigma}}_{m}^{(i+1)}]_{1,2:i+1}([\bar{\mathbf{\Sigma}}_{m}^{(i+1)}]_{2:i+1,2:i+1})^{-1}\mathbf{e}_{k}.
\end{aligned}
\label{partialgr}
\end{align}

Similar to the OAMP-based algorithm, the common sparsity pattern among all antennas is leveraged in the FPAMP-based algorithm. When the FPAMP iterations terminate, other steps follow those of the proposed OAMP-based algorithm in Section \ref{extendtoMMV}. The proposed FPAMP-based algorithm for joint activity-delay detection and channel estimation are summarized in Algorithm \ref{algo2}.

\begin{algorithm}[t]
\caption{The Proposed FPAMP-based Algorithm \label{algo2}}
{\bf Input:}
The normalized received pilot signal $\mathbf{Y}$, pilot sequences $\mathbf{P}$, maximum synchronization
delay $T$, maximum number of iterations $Q$, and accuracy tolerance $\epsilon$.\\
{\bf Output:}
The estimated effective channel matrix $\hat{\mathbf{H}}$, set of active users $\hat{\mathcal{K}}$, and synchronization delays $\{t_{n}\}$'s, $n \in \hat{\mathcal{K}}$.
\begin{algorithmic}[1]
\STATE Calculate the rectangular free cumulants $\{\kappa_{2k}\}_{k \geq 1}$ with (\ref{freecumulant}).\\
{\bf Initialize:}
$i \leftarrow 0$, $\mathbf{s}_{m}^{(1)}=\mathbf{y}_{m}$, $\forall m \in \mathcal{M}$, $\mathbf{h}_{m}^{(1)}=\mathbf{P}^{\mathrm{H}}\mathbf{s}_{m}^{(1)}$, $\forall m \in \mathcal{M}$, $\omega_{n,t}^{(1)}=\frac{\lambda}{T+1}$, $\forall n \in \mathcal{N}$, $\forall t \in \{0,\cdots,T\}$, $\bar{\boldsymbol{\mu}}_{m}^{(1)}=\frac{L}{(T+1)N}\kappa_2$, $\forall m \in \mathcal{M}$, $\bar{\mathbf{\Sigma}}_{m}^{(1)}=\kappa_2\frac{\lambda}{N}\sum_{n=1}^{N}\beta_{n}$, $\forall m \in \mathcal{M}$, and $\bar{\mathbf{\Omega}}_{m}^{(1)}=\frac{1}{(T+1)N}\kappa_2\mathbf{y}_{m}^{\mathrm{H}}\mathbf{y}_{m}+\frac{L}{(T+1)N}\kappa_4\frac{\lambda}{N}\sum_{n=1}^{N}\beta_{n}$, $\forall m \in \mathcal{M}$.
\WHILE{{$i < Q$} \text{and} {$\frac{\sum_{m}||\tilde{\mathbf{h}}_{m}^{(i)}-\tilde{\mathbf{h}}_{m}^{(i-1)}||_{2}^{2}}{\sum_{m}||\tilde{\mathbf{h}}_{m}^{(i-1)}||_{2}^{2}} > \epsilon$}}
\STATE $i \leftarrow i+1$
\STATE Calculate the auxiliary matrices $\bar{\mathbf{\Psi}}_{m}^{(i+1)}$, $\bar{\mathbf{\Phi}}_{m}^{(i+1)}$, $\bar{\mathbf{\Gamma}}_{m}^{(i+1)}$
\Statex \quad and $\bar{\mathbf{\Delta}}_{m}^{(i+1)}$, $\forall m$ according to (\ref{psibar}), (\ref{phibar}), (\ref{Gammabar}) and (\ref{Deltabar}).
\STATE Calculate the matrices $\mathbf{M}_{m,\gamma}^{(i+1)}$ and the Onsager coeffi-
\Statex \quad cients $\{\gamma_{m,j}^{(i)}\}_{j=1}^{i-1}$, $\forall m$ according to (\ref{Mgamma}) and (\ref{gammaupdate}), where 
\Statex \quad $\mathbf{\Psi}_{m}^{(i+1)}=\bar{\mathbf{\Psi}}_{m}^{(i+1)}$ and $\mathbf{\Phi}_{m}^{(i+1)}=\bar{\mathbf{\Phi}}_{m}^{(i+1)}$, $\forall m$.
\STATE Perform (\ref{fpamp1}), $\forall m$.
\STATE Update the prior sparsity ratio $\omega_{n,t}^{(i)}$, $\forall n, t$ according to
\Statex \quad (\ref{sparsityratio}).
\STATE Perform (\ref{fpamp2}), $\forall m$ according to (\ref{fpampc})-(\ref{ffunction}), where $\lambda_{n,t,m}$
\Statex \quad is replaced by $\omega_{n,t}^{(i)}$.
\STATE Calculate the matrices $\mathbf{M}_{m,\alpha}^{(i+1)}$ and the Onsager coeffi-
\Statex \quad cients $\{\alpha_{m,j}^{(i)}\}_{j=1}^{i}$, $\forall m$ according to (\ref{Malpha}) and (\ref{alphaupdate}), where 
\Statex \quad $\mathbf{\Psi}_{m}^{(i+1)}=\bar{\mathbf{\Psi}}_{m}^{(i+1)}$ and $\mathbf{\Phi}_{m}^{(i+1)}=\bar{\mathbf{\Phi}}_{m}^{(i+1)}$, $\forall m$.
\STATE Perform (\ref{fpamp3}), $\forall m$.
\STATE Calculate the covariance matrix $\bar{\mathbf{\Sigma}}_{m}^{(i+1)}$, $\forall m$ for the
\Statex \quad denoiser $g^{(i+1)}$ according to (\ref{Sigmaupdate}) and (\ref{Sigmaupdate1}).
\STATE Perform (\ref{fpamp4}), $\forall m$ according to (\ref{Sbar}) and (\ref{gfunction}).
\STATE Calculate the mean vector $\bar{\boldsymbol{\mu}}_{m}^{(i+1)}$, $\forall m$ according to (\ref{muupdate}) 
\Statex \quad and (\ref{Mbarupdate}).
\STATE Calculate the covariance matrix $\bar{\mathbf{\Omega}}_{m}^{(i+1)}$, $\forall m$ for the 
\Statex \quad denoiser $f^{(i+1)}$ according to (\ref{Omegaupdate1}), (\ref{Omegaupdate2}) and (\ref{Omegaupdate3}).
\STATE Calculate the partial derivatives $\{\partial_{h_{n,t,m}^{(k)}} [f^{(i)}(\mathbf{h}_{n,m}^{(1)},\cdots,$ 
\Statex \quad $\mathbf{h}_{n,m}^{(i)})]_t\}$, $\{\partial_{G_{l,m}} g^{(i+1)}(r_{l,m}^{(1)},\cdots,r_{l,m}^{(i)},y_{l,m})\}$ and
\Statex \quad $\{\partial_{r_{l,m}^{(k)}} g^{(i+1)}(r_{l,m}^{(1)},\cdots,r_{l,m}^{(i)},y_{l,m})\}$, $\forall m$ according to
\Statex \quad (\ref{partialf}), (\ref{partialgg}) and (\ref{partialgr}), where $\lambda_{n,t,m}$ is replaced by $\omega_{n,t}^{(i)}$.
\ENDWHILE
\STATE Obtain $\hat{\mathbf{H}}=[\hat{\mathbf{h}}_{1},\cdots,\hat{\mathbf{h}}_{M}]$ with $\hat{\mathbf{h}}_{m}=\tilde{\mathbf{h}}_{m}^{(I)}$.
\STATE Determine the set of active users as $\hat{\mathcal{K}}\triangleq \{n\in \mathcal{N}|\sum_{t=0}^{T} \omega_{n,t}^{(I+1)}\geq \theta\}$.
\STATE Determine the synchronization delays as $t_{n}=\mathop{\arg\max}\limits_{t \in \{0,\cdots,T\}} \omega_{n,t}^{(I+1)}$, $\forall n$.
\end{algorithmic}
\end{algorithm}

\subsection{Computational Complexity Analysis \label{sectioncomplexity}}
The computational complexity of the proposed OAMP-based and FPAMP-based algorithms is analyzed and compared with that of the AMP-based algorithm \cite{wzhu2021}. Since the deep unrolling method in \cite{wzhu2021} is not able to reduce the complexity of the AMP algorithm significantly, it is not considered in the analysis. As shown in Table \ref{complexitytable}, we choose the number of complex-valued multiplications as the metric of interest, and the complexity of one real-valued multiplication is assumed to be one quarter of the complex-valued counterpart's. For OAMP-based algorithm that uses LMMSE, the computational complexity is highest due to the matrix inverse operation (\ref{decor}) in each iteration. Our proposed FPAMP-based algorithm avoids the matrix inversion and thus enjoys a low complexity. For iteration with index $i \ll N(T+1)$, the FPAMP-based and AMP-based algorithms have comparable complexity. Our experimental results show, the maximum iteration number for convergence of FPAMP-based algorithm typically does not exceed $20$, which is far smaller than the value of $(T+1)N$ in grant-free massive RA systems.

We use the simulation setting with $K = 50$, $M = 16$, $N=400$, $\bar{L} = 50$, and $T=4$, as will be detailed in Section \ref{sectionv}, to give an intuitive picture on the computational complexity of these algorithms. The numbers of complex-valued multiplications of the AMP and OAMP-based algorithms in one iteration are given by $2.2\times 10^6$ and $9.4 \times 10^7$, respectively. In addition, suppose $i$ is swept from $1$ to $10$, the number of complex-valued multiplications of the FPAMP algorithm in the $i$-th iteration increases from $1.8\times 10^6$ to $4.2\times 10^6$. These results confirm that the FPAMP-based algorithm has a much lower complexity compared with the OAMP-based algorithm.
\begin{table}[t]
\caption{Computational Complexity Analysis}
\centering
\scalebox{1}{
    \begin{tabular}{c|c}
    \hline
    Algorithm & Number of complex multiplications in each iteration\\
    \hline
    AMP-based & $\frac{1}{4}M\left(5(T+1)NL+9(T+1)N+3L\right) $\\
    \hline
    OAMP-based & $\frac{1}{4}M\left(4(T+1)NL^2\!+\!L^3+2(T+1)NL+4(T+1)N\right)$\\
    \hline
    FPAMP-based & \makecell{$\frac{1}{4}M\Big(4(T+1)NL+3(T+1)N(i+1)$\\ $+L(3i+2)+12i^3+9i^2+10i$\\$+(4i^3+4i^2)\times(\sum_{k=1}^{2i}\log_{2}k+\log_{2}(2i+1-k))\Big)$}\\
    \hline
    \end{tabular}}
\label{complexitytable}
\end{table}

\section{Simulation Results \label{sectionv}}
We consider an uplink cellular network where 400 users are uniformly distributed within a circular ring centered at a multi-antenna BS in our simulations. The path loss of user $n$ is modeled as $\beta_{n}=-128.1-36.7\log_{10}(d_{n})$ (dB) with $d_{n} \in [0.05, 1]$ km. The number of BS antennas is $M=16$ and the pilot sequence length is $\bar{L}=50$. Besides, the transmit power of each user is set to be $23$ dBm, and the noise power spectrum density is $-169$ dBm/Hz over $1$ MHz bandwidth. In addition, the maxmium number of iterations is $Q=50$, and the accuracy tolerance is $\epsilon=10^{-5}$. The simulation results are averaged over $10^5$ independent channel and active user set realizations. By default, the number of active users $K$, the transmit power, and the maximum delay $T$ are set as $50$, $23$ dBm, and $4$ symbols, respectively. For comparisons, the following three baselines are also simulated:
\begin{itemize}
    \item \textbf{Group LASSO-based algorithm \cite{lliu2021}:} This method formulates the joint activity-delay detection and channel estimation problem as a group LASSO problem, which is solved by a block coordinate descent algorithm. 
\end{itemize}
\begin{itemize}
    \item \textbf{AMP-based algorithm \cite{wzhu2021}:} This method uses the AMP algorithm with an MMSE denoiser to perform joint activity-delay detection and channel estimation. However, we do not consider the deep learning implementation in \cite{wzhu2021} due to the marginal gain of detection and estimation accuracy.
\end{itemize}
\begin{itemize}
    \item \textbf{Covariance-based algorithm \cite{zwang2022}:} This method employs a block coordinate descent algorithm based on the covariance matrix of the received pilot signal to first perform activity and delay detection, after which, the channel coefficients are estimated via an MMSE algorithm.
\end{itemize}

\begin{figure}[t]
\centering
\includegraphics[width=3.4in]{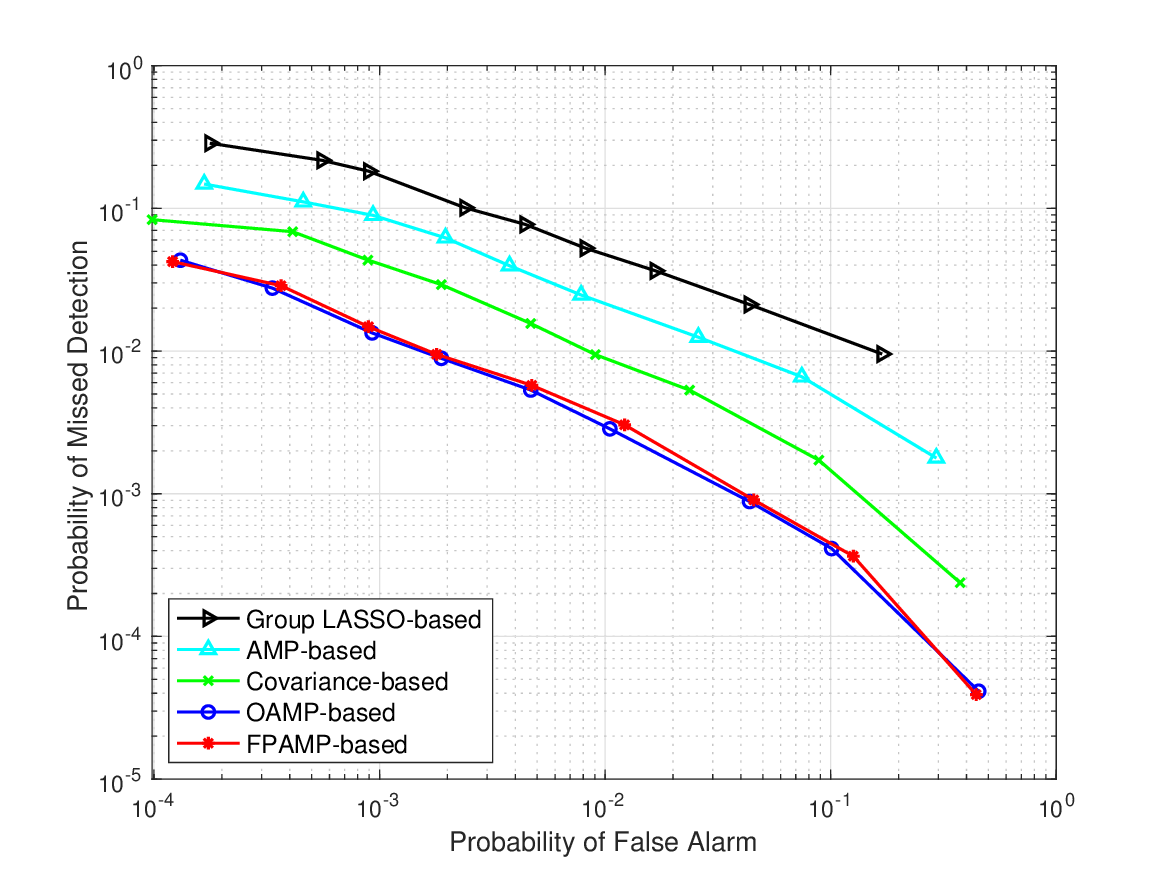}
\caption{Probability of missed detection versus probability of false alarm.}
\label{AUD}
\end{figure}
We first evaluate the user activity detection performance of different algorithms. The probability of missed detection versus the probability of false alarm is shown in Fig. \ref{AUD} by tuning the threshold $\theta$ for determining the set of active users. In particular, the probability of missed detection is defined as $K_m \slash K$, where $K_m$ denotes the number of active users that are detected as inactive, and the probability of false alarm is defined as $N_{f} \slash \left(N-K\right)$ with $N_f$ denoting the number of inactive users that are detected as active. It is first observed from Fig. \ref{AUD} that the two proposed algorithms achieve the lowest missed detection probabilities for a given false alarm probability. Besides, although the covariance-based method outperforms the group LASSO-based and AMP-based methods, it is outperformed by the two proposed algorithms by a large margin. Such performance improvement, on one hand, is attributed to the capabilities of the proposed OAMP-based and FPAMP-based algorithms in handling pilot matrices with non-i.i.d. entries. On the other hand, exploration of the common sparsity pattern among multiple BS antennas also contributes significantly. In addition, the proposed FPAMP-based algorithm has negligible performance degradation compared with the OAMP-based algorithm despite with reduced computational complexity.

Next, the delay detection error probability is shown in Fig. \ref{DelayvK}, which is defined as $K_{d} \slash K$ with $K_d$ denoting the number of active users with wrongly detected synchronization delay. The delay detection error probabilities are obtained with a fixed false alarm probability of $10^{-1}$, which is achieved by tuning the threshold value $\theta$. It is seen that an increased number of active users leads to degraded delay detection performance due to the limited radio resources for pilot transmissions. In accordance with in Fig. \ref{AUD}, the two proposed algorithms achieve the best performance, which results from their supreme activity detection performance since the delay detection procedure is based on the channel estimation and activity detection. Since the successful transmission critically depends on correct synchronization delay detection, and the missed detection probability is lower bounded by the delay detection error probability with fixed false alarm probability. Hence, we adopt the delay detection error probability as an indicator on the capability of supporting active users in asynchronous massive RA systems. Specifically, assuming the probability of missed detection requirement is $2\times 10^{-3}$, the two proposed algorithms are able to support 60 active users while the AMP-based algorithm can only support 34, which is a remarkable 76\% increase.
\begin{figure}[t]
\centering
\includegraphics[width=3.4in]{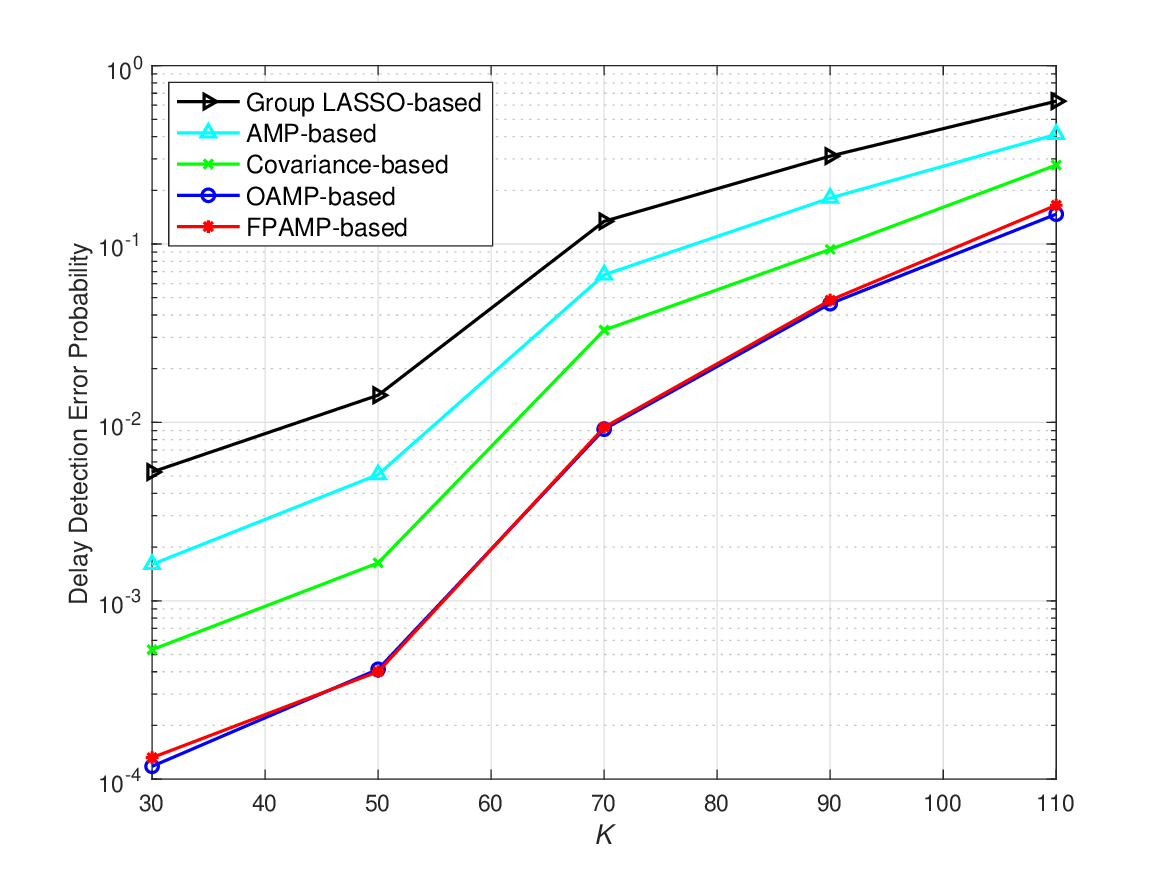}
\caption{Delay detection error probability versus the number of active users $K$.}
\label{DelayvK}
\end{figure}

We investigate the relationship between the normalized mean square error (NMSE) of channel estimation and transmit power in Fig. \ref{NMSEvP}. Similar to Figs. \ref{AUD} and \ref{DelayvK}, the two proposed algorithms achieve significant NMSE reduction compared with the baselines, especially with large transmit power, and the FPAMP-based algorithm maintains similar performance as the OAMP-based algorithm. These observations again validate the benefits of the LMMSE estimator in the OAMP-based algorithm in eliminating multi-user interference, and using rectangular free cumulants of the non-i.i.d Gaussian pilot matrix. In particular, the latter enables the compatibility of the low-complexity AMP framework with general pilot matrices.
\begin{figure}[t]
\centering
\includegraphics[width=3.4in]{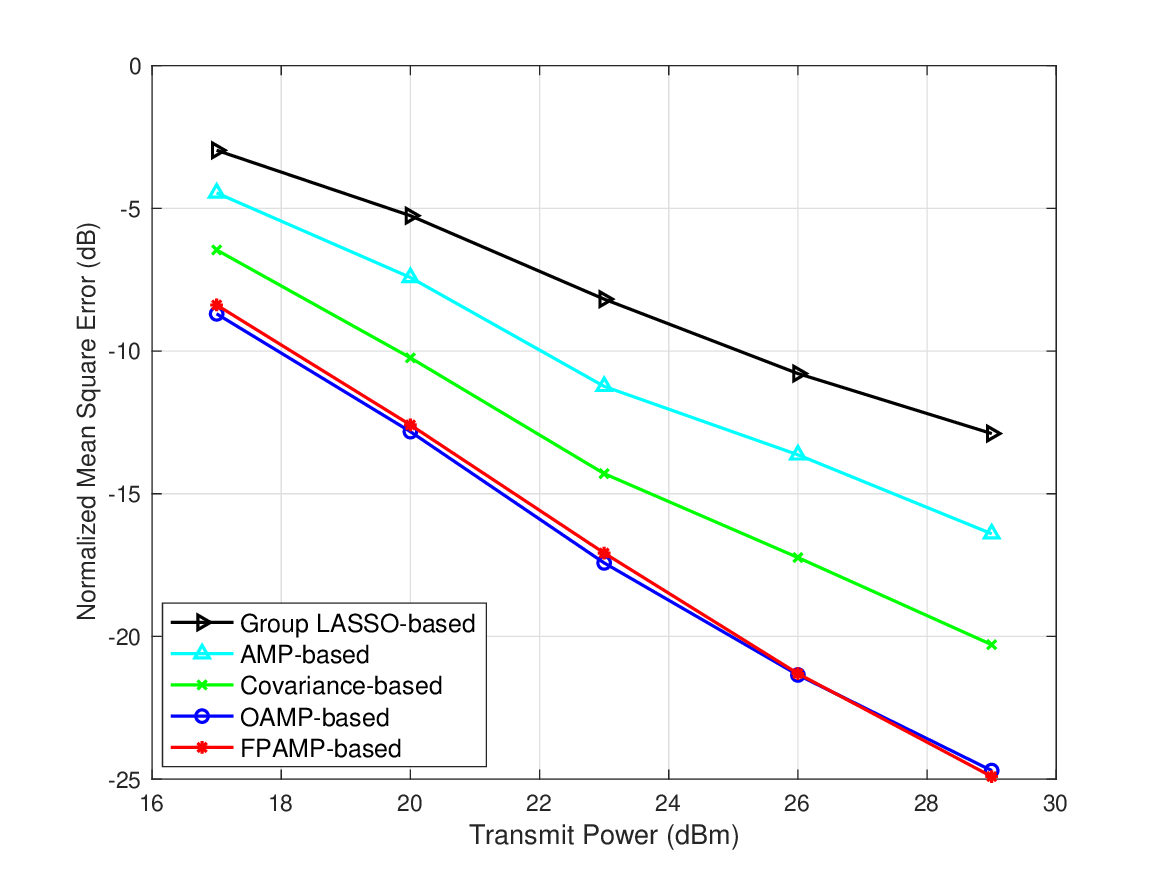}
\caption{NMSE of channel estimation versus transmit power.}
\label{NMSEvP}
\end{figure}

The user activity detection performance with different maximum synchronization delay, i.e., $T=4$ and $T=9$, is further examined in Fig. \ref{AUDwithT}. We find that the performance of all methods deteriorates with a larger synchronization delay since more zero elements exist in the expanded pilot matrix that destroys the Gaussianity to a greater extent. Furthermore, the performance of the two proposed algorithms is less affected compared with the AMP-based algorithm, indicating that they are more robust to the non-i.i.d. Gaussian pilot matrices.
\begin{figure}[t]
\centering
\includegraphics[width=3.4in]{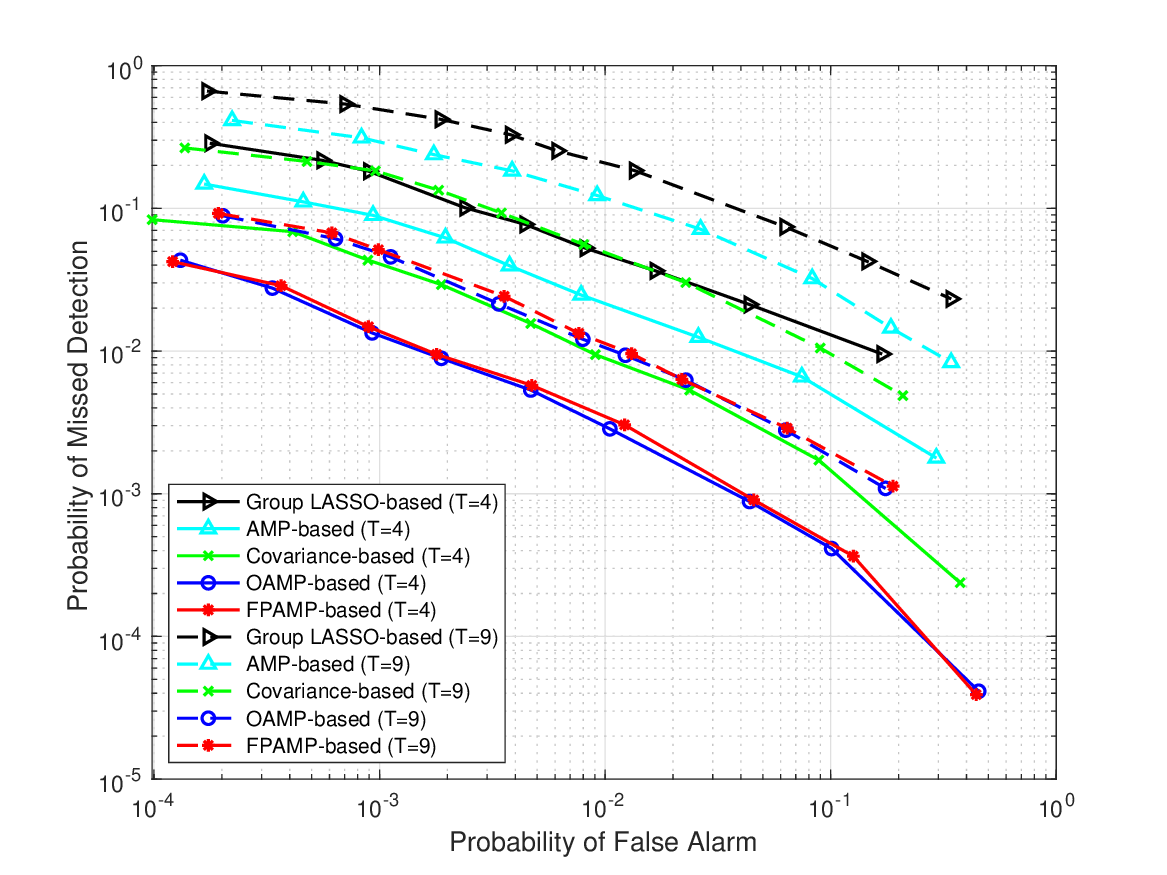}
\caption{Probability of missed detection versus probability of false alarm with different maximum synchronization delays.}
\label{AUDwithT}
\end{figure}

The computational complexity of different algorithms in terms of average execution time on an Apple Macbook Pro laptop (Model: 14 inch, M1 Pro Chip with 16GB RAM) is summarized in Table \ref{time}. From the table, it is clear that the group LASSO-based method has the lowest complexity. However, its performance is far inferior to that of other methods. Besides, while the OAMP-based algorithm achieves the best detection and estimation accuracy, it suffers from heavy computational overhead that originates from the LMMSE estimator. Although the covariance-based method has similar computational complexity as the OAMP-based method, there exists a significant performance gap as previously shown in Figs. \ref{AUD} $\sim$ \ref{NMSEvP}. In addition, the AMP-based and FPAMP-based algorithms have comparable average execution time, and the FPAMP-based algorithm secures approximately 41\% complexity reduction compared with the OAMP-based algorithm. This demonstrates that the FPAMP-based algorithm can fully exploit the property of pilot matrices with their the rectangular free cumulants to achieve low complexity and satisfactory detection/estimation accuracy.
\begin{table}[t]
\caption{Average execution time of different algorithms}
\centering
\begin{tabular}{c|c}
    \hline
     Algorithm & Average execution time (s)  \\
    \hline
    Group LASSO-based & 2.78  \\
    \hline
    AMP-based & 3.27 \\
    \hline
    Covariance-based & 5.11 \\
    \hline
    OAMP-based & 5.92\\
    \hline
    FPAMP-based & 3.51\\
    \hline
    \end{tabular}
\label{time}
\end{table}

\section{Conclusions \label{sectionvi}}
In this paper, we investigated the joint activity detection, synchronization delay detection, and channel estimation in asynchronous grant-free massive random access (RA) systems. Considering that entries in the pilot matrix are not independent and identically Gaussian distributed, we first proposed a novel algorithm based on orthogonal approximate message passing (OAMP), which also fully utilizes the common sparsity among the received pilot signals of multiple base station antennas. To accelerate the computation, a free probability approximate message passing (FPAMP)-based algorithm was further developed, which explores the rectangular free cumulants of the pilot matrix to avoid matrix inversion. Simulation results showed the effectiveness of the proposed algorithms, and the potential of the FPAMP-based algorithm for fast joint activity-delay detection and channel estimation in grant-free massive RA systems. 

Our study demonstrates the benefits of exploiting high-order statistical information from the pilot matrix when designing the detection and estimation algorithm for grant-free massive RA. For future investigations, it would be interesting to develop more advanced algorithms for asynchronous massive RA systems, e.g., extending the proposed algorithms by fusing deep unrolling and FPAMP, to further enhance the performance and reduce the complexity.

\appendices
\section{Proof of Theorem \ref{theoremconverge} \label{appendixa}}
In order to prove the theorem, we first present the general recursion as follows:
\begin{align}
\mathbf{z}_{m}^{(i)} = \mathbf{P}^{\mathrm{H}}\mathbf{u}_{m}^{(i)}-\sum_{j=1}^{i-1}b_{m,j}^{(i)}\mathbf{v}_{m}^{(j)},
\label{generalrecursion1}
\end{align}
\begin{align}
\mathbf{v}_{m}^{(i)} = \tilde{f}^{(i)}(\mathbf{z}_{m}^{(1)},\cdots,\mathbf{z}_{m}^{(i)},\mathbf{E}_{m}),
\label{generalrecursion2}
\end{align}
\begin{align}
\mathbf{o}_{m}^{(i)} = \mathbf{P}\mathbf{v}_{m}^{(i)}-\sum_{j=1}^{i}a_{m,j}^{(i)}\mathbf{u}_{m}^{(j)},
\label{generalrecursion3}
\end{align}
\begin{align}
\mathbf{u}_{m}^{(i+1)} = \tilde{g}^{(i+1)}(\mathbf{o}_{m}^{(1)},\cdots,\mathbf{o}_{m}^{(i)},\mathbf{F}_{m}),
\label{generalrecursion4}
\end{align}
    
\noindent where $\mathbf{u}_{m}^{(1)}=\mathbf{0}$, $\mathbf{z}_{m}^{(1)}=\mathbf{P}^{\mathrm{H}}\mathbf{u}_{m}^{(1)}$, and $\mathbf{E}_{m}=\mathbf{h}_{m}$ and $\mathbf{F}_{m}=\mathbf{n}_{m}$ are side information independent of $\mathbf{P}$ with finite second-order moments. Functions $\tilde{f}^{(i)}: \mathbb{C}^{i+1}\rightarrow \mathbb{C}$ and $\tilde{g}^{(i+1)}: \mathbb{C}^{i+1}\rightarrow \mathbb{C}$ are pseudo-Lipschitz. In particular, $\tilde{f}^{(1)}(\mathbf{z}_{m}^{(1)},\mathbf{h}_{m})=\mathbf{h}_{m}$, and for $i=1,2,\cdots$, $\tilde{f}^{(i+1)}(\cdot)$ and $\tilde{h}^{(i+1)}(\cdot)$ are defined as follows:
\begin{align}
\begin{aligned}
&\tilde{f}^{(i+1)}(\mathbf{z}_{m}^{(1)},\cdots,\mathbf{z}_{m}^{(i+1)},\mathbf{h}_{m})\\
&=f^{(i)}(\mathbf{z}_{m}^{(2)}+\bar{\mu}_{m}^{(1)}\mathbf{h}_{m},\cdots,\mathbf{z}_{m}^{(i+1)}+\bar{\mu}_{m}^{(i)}\mathbf{h}_{m}),
\end{aligned}
\end{align}
\begin{align}
\tilde{g}^{(i+1)}(\mathbf{o}_{m}^{(1)},\cdots,\mathbf{o}_{m}^{(i)},\mathbf{n}_{m})=g^{(i)}(\mathbf{o}_{m}^{(2)},\cdots,\mathbf{o}_{m}^{(i)},\mathbf{o}_{m}^{(1)}+\mathbf{n}_{m}).
\end{align}

Then, the coefficients $\{a_{m,j}^{(i)}\}_{j=1}^{i}$ and $\{b_{m,j}^{(i)}\}_{j=1}^{i-1}$ are respectively given by the last rows of matrices $\hat{\mathbf{M}}^{(i)}_{m,a}$ and $\hat{\mathbf{M}}^{(i)}_{m,b}$ as follows:
\begin{align}
\hat{\mathbf{M}}^{(i)}_{m,a}=\sum_{j=0}^{i} \kappa_{2(j+1)} \hat{\mathbf{\Psi}}_{m}^{(i)}\left(\hat{\mathbf{\Phi}}_{m}^{(i)} \hat{\mathbf{\Psi}}_{m}^{(i)}\right)^j,
\label{Ma}
\end{align}
\begin{align}
\hat{\mathbf{M}}^{(i)}_{m,b}=\frac{L}{(T+1)N} \sum_{j=0}^{i-1} \kappa_{2(j+1)} \hat{\mathbf{\Phi}}_{m}^{(i)}\left(\hat{\mathbf{\Psi}}_{m}^{(i)} \hat{\mathbf{\Phi}}_{m}^{(i)}\right)^j.
\label{Mb}
\end{align}
\noindent In the above expressions, the auxiliary matrices $\hat{\mathbf{\Psi}}_{m}^{(i)}$ and $\hat{\mathbf{\Phi}}_{m}^{(i)}$ are defined as follows:
\begin{align}
\hat{\mathbf{\Psi}}_{m}^{(i)}=\left[\begin{array}{ccccc}
0 & 0 & \ldots & 0 & 0 \\
0 & \left\langle\partial_2 \mathbf{v}_{m}^{(2)}\right\rangle & 0 & \ldots & 0 \\
0 & \left\langle\partial_2 \mathbf{v}_{m}^{(3)}\right\rangle & \left\langle\partial_3 \mathbf{v}_{m}^{(3)}\right\rangle & \ldots & 0 \\
\vdots & \vdots & \vdots & \ddots & \vdots \\
0 & \left\langle\partial_2 \mathbf{v}_{m}^{(i)}\right\rangle & \left\langle\partial_3 \mathbf{v}_{m}^{(i)}\right\rangle & \ldots & \left\langle\partial_i \mathbf{v}_{m}^{(i)}\right\rangle
\end{array}\right],
\end{align}
\begin{align}
\hat{\mathbf{\Phi}}_{m}^{(i)}=\left[\begin{array}{ccccc}
0 & 0 & \ldots & 0 & 0 \\
\left\langle\partial_1 \mathbf{u}_{m}^{(2)}\right\rangle & 0 & 0 & \ldots & 0 \\
\left\langle\partial_1 \mathbf{u}_{m}^{(3)}\right\rangle & \left\langle\partial_2 \mathbf{u}_{m}^{(2)}\right\rangle & 0 & \ldots & 0 \\
\vdots & \vdots & \ddots & \vdots & \vdots \\
\left\langle\partial_1 \mathbf{u}_{m}^{(i)}\right\rangle & \left\langle\partial_2 \mathbf{u}_{m}^{(i)}\right\rangle & \ldots & \left\langle\partial_{i-1} \mathbf{u}_{m}^{(i)}\right\rangle & 0
\end{array}\right],
\end{align}
\noindent where $\partial_k \mathbf{v}_{m}^{(i)}$ and $\partial_k \mathbf{u}_{m}^{(i)}$ denote the row-wise partial derivatives.

On the other hand, the state evolution of the general recursion is defined as follows:
\begin{align}
\left[Z_{m}^{(1)}, \ldots, Z_{m}^{(i)}\right]^{\mathrm{T}} \sim \mathcal{CN}(\mathbf{0},\tilde{\mathbf{\Omega}}_{m}^{(i)}),
\label{generalSE1}
\end{align}
\begin{align}
V_{m}^{(i)}=\tilde{f}^{(i)}(Z_{m}^{(1)}, \ldots, Z_{m}^{(i)},H_{m}),
\label{generalSE2}
\end{align}
\begin{align}  
\left[O_{m}^{(1)},\cdots,O_{m}^{(i)}\right]^{\mathrm{T}} \sim \mathcal{CN}(\mathbf{0},\tilde{\mathbf{\Sigma}}_{m}^{(i)}),
\label{generalSE3}
\end{align}
\begin{align}
    U_{m}^{(i+1)}=\tilde{g}^{(i+1)}(O_{m}^{(1)},\cdots,O_{m}^{(i)},N_{m}).
\label{generalSE4}
\end{align}
\noindent The covariance matrices $\tilde{\mathbf{\Sigma}}_{m}^{(i+1)}$ and $\tilde{\mathbf{\Omega}}_{m}^{(i+1)}$ in (\ref{generalSE3}) and (\ref{generalSE1}) are computed as
\begin{align}
\tilde{\mathbf{\Sigma}}_{m}^{(i+1)} = \sum_{j=0}^{2i+1}\bar{\kappa}_{2(j+1)}\tilde{\mathbf{\Xi}}_{m,j}^{(i+1)},
\label{tildesigma}
\end{align}
\begin{align}
    \tilde{\mathbf{\Omega}}_{m}^{(i+1)} = \frac{L}{(T+1)N}\sum_{j=0}^{2i}\bar{\kappa}_{2(j+1)}\tilde{\mathbf{\Theta}}_{m,j}^{(i+1)},
\label{tildeomega}
\end{align}
where $\tilde{\mathbf{\Xi}}_{m,0}^{(i+1)}=\tilde{\mathbf{\Gamma}}_{m}^{(i+1)}$, $\tilde{\mathbf{\Theta}}_{m,0}^{(i+1)}=\tilde{\mathbf{\Delta}}_{m}^{(i+1)}$, and for $j \geq 1$:
\begin{align}
\begin{aligned}
\tilde{\mathbf{\Xi}}_{m,j}^{(i+1)} & =\sum_{k=0}^j\left(\tilde{\mathbf{\Psi}}_{m}^{(i+1)} \tilde{\mathbf{\Phi}}_{m}^{(i+1)}\right)^{k} \tilde{\mathbf{\Gamma}}_{m}^{(i+1)}\left(\left(\tilde{\mathbf{\Psi}}_{m}^{(i+1)} \tilde{\mathbf{\Phi}}_{m}^{(i+1)}\right)^{\mathrm{H}}\right)^{j-k} \\
& +\sum_{k=0}^{j-1}\left(\tilde{\mathbf{\Psi}}_{m}^{(i+1)} \tilde{\mathbf{\Phi}}_{m}^{(i+1)}\right)^k \tilde{\mathbf{\Psi}}_{m}^{(i+1)} \tilde{\mathbf{\Delta}}_{m}^{(i+1)} (\tilde{\mathbf{\Psi}}_{m}^{(i+1)})^{\mathrm{H}}\\
&\quad \quad \cdot\left(\left(\tilde{\mathbf{\Psi}}_{m}^{(i+1)} \tilde{\mathbf{\Phi}}_{m}^{(i+1)}\right)^{\mathrm{H}}\right)^{j-k-1},
\end{aligned}
\label{tildeXi}
\end{align}
\begin{align}
\begin{aligned}
\tilde{\mathbf{\Theta}}_{m,j}^{(i+1)} & =\sum_{k=0}^j\left(\tilde{\mathbf{\Phi}}_{m}^{(i+1)}\tilde{\mathbf{\Psi}}_{m}^{(i+1)} \right)^{k} \tilde{\mathbf{\Delta}}_{m}^{(i+1)}\left(\left(\tilde{\mathbf{\Phi}}_{m}^{(i+1)}\tilde{\mathbf{\Psi}}_{m}^{(i+1)} \right)^{\mathrm{H}}\right)^{j-k} \\
& +\sum_{k=0}^{j-1}\left(\tilde{\mathbf{\Phi}}_{m}^{(i+1)}\tilde{\mathbf{\Psi}}_{m}^{(i+1)} \right)^k \tilde{\mathbf{\Phi}}_{m}^{(i+1)} \tilde{\mathbf{\Gamma}}_{m}^{(i+1)} (\tilde{\mathbf{\Phi}}_{m}^{(i+1)})^{\mathrm{H}}\\
&\quad \quad \left(\left(\tilde{\mathbf{\Phi}}_{m}^{(i+1)}\tilde{\mathbf{\Psi}}_{m}^{(i+1)} \right)^{\mathrm{H}}\right)^{j-k-1}.
\label{tildetheta}
\end{aligned}
\end{align}

\noindent In particular, $\tilde{\mathbf{\Psi}}_{m,j}^{(i+1)}$ and $\tilde{\mathbf{\Phi}}_{m}^{(i+1)}$ are the deterministic versions of $\hat{\mathbf{\Psi}}_{m,j}^{(i+1)}$ and $\hat{\mathbf{\Phi}}_{m}^{(i+1)}$ given as follows:
\begin{align}
\begin{aligned}
\label{tildepsi}
&\tilde{\mathbf{\Psi}}_{m}^{(i+1)}=\\
&\left[\begin{array}{ccccc}
0 & 0 & \ldots & 0 & 0 \\
0 & \mathbb{E}[\partial_2 V_{m}^{(2)}] & 0 & \ldots & 0 \\
0 & \mathbb{E}[\partial_2 V_{m}^{(3)}] & \mathbb{E}[\partial_3 V_{m}^{(3)}] & \ldots & 0 \\
\vdots & \vdots & \vdots & \ddots & \vdots \\
0 & \mathbb{E}[\partial_3 V_{m}^{(i+1)}] & \mathbb{E}[\partial_3 V_{m}^{(i+1)}] & \ldots & \mathbb{E}[\partial_{i+1} V_{m}^{(i+1)}]
\end{array}\right],
\end{aligned}
\end{align}
\begin{align}
\begin{aligned}
\label{tildephi}
&\tilde{\mathbf{\Phi}}_{m}^{(i+1)}=\\
&\left[\begin{array}{ccccc}
0 & 0 & \ldots & 0 & 0 \\
\mathbb{E}[\partial_1 U_{m}^{(2)}] & 0 & 0 & \ldots & 0 \\
\mathbb{E}[\partial_1 U_{m}^{(3)}] & \mathbb{E}[\partial_2 U_{m}^{(2)}] & 0 & \ldots & 0 \\
\vdots & \vdots & \ddots & \vdots & \vdots \\
\mathbb{E}[\partial_1 U_{m}^{(i+1)}] & \mathbb{E}[\partial_2 U_{m}^{(i+1)}] & \ldots & \mathbb{E}[\partial_{i} U_{m}^{(i+1)}] & 0
\end{array}\right].
\end{aligned}
\end{align}
\noindent Besides, $\tilde{\mathbf{\Gamma}}_{m}^{(i+1)}$ and $\tilde{\mathbf{\Delta}}_{m}^{(i+1)}$ are given as follows:
\begin{align}
\begin{aligned}
&\tilde{\mathbf{\Gamma}}_{m}^{(i+1)}=\\
&\left[\begin{array}{cccc}
\mathbb{E}[(V_{m}^{(1)})^2] & \mathbb{E}[V_{m}^{(1)}V_{m}^{(2)}] & \ldots & \mathbb{E}[V_{m}^{(1)}V_{m}^{(i+1)}] \\
\mathbb{E}[V_{m}^{(1)}V_{m}^{(2)}] & \mathbb{E}[(V_{m}^{(2)})^2] & \ldots & \mathbb{E}[V_{m}^{(2)}V_{m}^{(i+1)}] \\
\mathbb{E}[V_{m}^{(1)}V_{m}^{(3)}] & \mathbb{E}[V_{m}^{(2)}V_{m}^{(3)}] & \ldots & \mathbb{E}[V_{m}^{(3)}V_{m}^{(i+1)}] \\
\vdots & \vdots & \ddots & \vdots \\
\mathbb{E}[V_{m}^{(1)}V_{m}^{(i+1)}] & \mathbb{E}[V_{m}^{(2)}V_{m}^{(i+1)}]  & \ldots & \mathbb{E}[(V_{m}^{(i+1)})^2]
\end{array}\right],
\label{Gammatilde}
\end{aligned}
\end{align}
\begin{align}
\begin{aligned}
&\tilde{\mathbf{\Delta}}_{m}^{(i+1)}=\\
&\left[\begin{array}{ccccc}
\mathbb{E}[(U_{m}^{(1)})^2] & \mathbb{E}[U_{m}^{(1)}V_{m}^{(2)}] & \ldots & \mathbb{E}[U_{m}^{(1)}U_{m}^{(i+1)}] \\
\mathbb{E}[U_{m}^{(1)}U_{m}^{(2)}] & \mathbb{E}[(U_{m}^{(2)})^2] & \ldots & \mathbb{E}[U_{m}^{(2)}U_{m}^{(i+1)}] \\
\mathbb{E}[U_{m}^{(1)}U_{m}^{(3)}] & \mathbb{E}[U_{m}^{(2)}U_{m}^{(3)}] & \ldots & \mathbb{E}[U_{m}^{(3)}U_{m}^{(i+1)}] \\
\vdots & \vdots & \ddots & \vdots \\
\mathbb{E}[U_{m}^{(1)}U_{m}^{(i+1)}] & \mathbb{E}[U_{m}^{(2)}U_{m}^{(i+1)}]  & \ldots & \mathbb{E}[(U_{m}^{(i+1)})^2]
\end{array}\right].
\label{Deltatilde}
\end{aligned}
\end{align}
The following theorem reveals the relationship between the general recursion and its state evolution.
\begin{theorem}
\label{theoremgeneral}
    Consider the general recursion in (\ref{generalrecursion1}) $\sim$ (\ref{generalrecursion4}) and its state evolution defined in (\ref{generalSE1}) $\sim$ (\ref{generalSE4}). Let $\tilde{\psi}$: $\mathbb{C}^{2i+1}\rightarrow \mathbb{C}$ and $\tilde{\phi}$: $\mathbb{C}^{2i+2}\rightarrow \mathbb{C}$ be any pseudo-Lipschitz functions of order 2. For $i = 1, 2, \cdots$, it is almost sure that
    \begin{align}
    \begin{aligned}
    \label{generaltheorem1}
        \lim _{(T+1)N \rightarrow \infty} &\frac{\sum_{t=1}^{(T+1)N}\tilde{\psi}\left(z_{t,m}^{(2)}, \ldots, z_{t,m}^{(i+1)}, v_{t,m}^{(2)}, \ldots, v_{t,m}^{(i+1)}, h_{t,m}\right)}{(T+1)N} \\
        &=\mathbb{E}\left[\tilde{\psi}\left(Z_{m}^{(2)}, \ldots, Z_{m}^{(i+1)}, V_{m}^{(2)}, \ldots, V_{m}^{(i+1)}, H_{m}\right)\right],
    \end{aligned}
    \end{align}
    \begin{align}
    \begin{aligned}
    \label{generaltheorem2}
        \lim _{L \rightarrow \infty} &\frac{1}{L} \sum_{l=1}^{L} \tilde{\phi}\left(o_{l,m}^{(1)}, \ldots, o_{l,m}^{(i)}, u_{l,m}^{(1)}, \ldots, u_{l,m}^{(i+1)}, n_{l,m}\right)\\
        &=\mathbb{E}\left[\tilde{\phi}\left(O_{m}^{(1)}, \ldots, O_{m}^{(i)}, U_{m}^{(1)}, \ldots, U_{m}^{(i+1)}, N_{m}\right)\right].
    \end{aligned}
    \end{align}
\end{theorem}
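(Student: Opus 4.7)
The plan is to establish the state evolution by a conditioning argument in the spirit of Bolthausen, adapted to rectangular bi-unitarily invariant pilot matrices via the machinery of rectangular free cumulants. The proof proceeds by induction on the iteration index $i$, with the case $i=1$ following immediately from the definition $\mathbf{z}_{m}^{(1)}=\mathbf{P}^{\mathrm{H}}\mathbf{u}_{m}^{(1)}=\mathbf{0}$ and $\mathbf{v}_{m}^{(1)}=\mathbf{h}_{m}$, combined with the $W_{2}$-convergence assumption $\mathbf{h}_{m}\stackrel{W_2}{\longrightarrow} H_m$ and $\mathbf{n}_{m}\stackrel{W_2}{\longrightarrow} N_m$. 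The one iteration update from $\mathbf{v}_{m}^{(1)}$ to $\mathbf{o}_{m}^{(1)}=\mathbf{P}\mathbf{h}_{m}$ produces a vector whose empirical distribution converges to $\mathcal{CN}(0,\bar{\kappa}_2\mathbb{E}[H_{m}^2])$ by the singular-value distributional assumption on $\mathbf{P}$ and the fact that $\bar{\kappa}_{2}=\bar{m}_{2}$.

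For the inductive step, I would condition on the sigma-algebra $\mathcal{F}_{i}$ generated by all quantities produced before iteration $i+1$, namely $\{\mathbf{u}_{m}^{(j)},\mathbf{z}_{m}^{(j)},\mathbf{v}_{m}^{(j)},\mathbf{o}_{m}^{(j)}\}_{j\leq i}$ together with $\mathbf{E}_{m}$ and $\mathbf{F}_{m}$. By the bi-unitary invariance of the singular vectors of $\mathbf{P}$, the conditional law of $\mathbf{P}$ given $\mathcal{F}_{i}$ decomposes into a deterministic part that must match the linear constraints $\mathbf{P}\mathbf{v}_{m}^{(j)}=\mathbf{o}_{m}^{(j)}+\sum_{k\leq j}a_{m,k}^{(j)}\mathbf{u}_{m}^{(k)}$ and $\mathbf{P}^{\mathrm{H}}\mathbf{u}_{m}^{(j)}=\mathbf{z}_{m}^{(j)}+\sum_{k<j}b_{m,k}^{(j)}\mathbf{v}_{m}^{(k)}$, plus a fresh bi-unitarily invariant part on the complementary subspace. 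Applying this decomposition to the update $\mathbf{o}_{m}^{(i+1)}=\mathbf{P}\mathbf{v}_{m}^{(i+1)}-\sum_{j=1}^{i+1}a_{m,j}^{(i+1)}\mathbf{u}_{m}^{(j)}$ produces a projection of $\mathbf{v}_{m}^{(i+1)}$ onto the span of past $\mathbf{v}_{m}^{(j)}$'s (the deterministic part) plus an independent fresh contribution; the role of the Onsager coefficients $\{a_{m,j}^{(i+1)}\}$ defined through $\hat{\mathbf{M}}^{(i+1)}_{m,a}$ is precisely to cancel the deterministic memory term, and the expansion (\ref{Ma}) in powers of $\hat{\mathbf{\Psi}}_{m}^{(i+1)}\hat{\mathbf{\Phi}}_{m}^{(i+1)}$ weighted by $\kappa_{2(j+1)}$ is exactly the combinatorial identity needed to express the conditional expectation of $\mathbf{P}\mathbf{v}_{m}^{(i+1)}$ onto past iterates in terms of the non-crossing pair partitions that define rectangular free cumulants.

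After this cancellation, the residual fresh contribution is (asymptotically) a centered Gaussian vector whose covariance is obtained by tracking how signal power propagates through the deterministic part of $\mathbf{P}$, and a direct calculation via (\ref{tildeXi}) and (\ref{tildetheta}) shows that this covariance coincides with the $(i+1)$-th principal minor of $\tilde{\mathbf{\Sigma}}_{m}^{(i+1)}$ given by (\ref{tildesigma}). The same argument applied to the $\mathbf{z}_{m}^{(i+1)}=\mathbf{P}^{\mathrm{H}}\mathbf{u}_{m}^{(i+1)}-\sum_{j=1}^{i}b_{m,j}^{(i+1)}\mathbf{v}_{m}^{(j)}$ update yields (\ref{tildeomega}), after noting the additional aspect-ratio factor $L/((T+1)N)$ which arises because $\mathbf{P}\in\mathbb{C}^{L\times (T+1)N}$ is rectangular. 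The pseudo-Lipschitz test function $\tilde{\psi}$ (resp.\ $\tilde{\phi}$) may then be handled as in Montanari--Bayati: the empirical averages over $t$ (resp.\ $l$) concentrate around expectations by standard moment-method concentration for bi-unitarily invariant ensembles, and the limit is exactly the expectation under the Gaussian law from the state evolution combined with the independent side information $H_{m}$ (resp.\ $N_{m}$). Finally, since $\tilde{f}^{(i+1)}$ and $\tilde{g}^{(i+1)}$ are themselves pseudo-Lipschitz, the conclusions extend from $\{\mathbf{z}_{m}^{(j)}\}$ and $\{\mathbf{o}_{m}^{(j)}\}$ to $\{\mathbf{v}_{m}^{(j)}\}$ and $\{\mathbf{u}_{m}^{(j+1)}\}$, closing the induction.

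The main obstacle I anticipate is verifying, at each inductive step, that the combinatorial identity $\sum_{j=0}^{i}\kappa_{2(j+1)}\hat{\mathbf{\Psi}}_{m}^{(i+1)}(\hat{\mathbf{\Phi}}_{m}^{(i+1)}\hat{\mathbf{\Psi}}_{m}^{(i+1)})^{j}$ correctly matches the conditional mean of $\mathbf{P}\mathbf{v}_{m}^{(i+1)}$ projected on past $\mathbf{u}_{m}^{(j)}$'s, and similarly that (\ref{tildeXi})--(\ref{tildetheta}) reproduce the covariance of the fresh Gaussian contribution. Both require one to enumerate non-crossing pair partitions on alternating index sets (odd minima vs.\ even minima) consistent with Definition 5 of rectangular free cumulants, and to show that the contribution of each partition factors through $\hat{\mathbf{\Psi}}_{m}^{(i+1)}$ and $\hat{\mathbf{\Phi}}_{m}^{(i+1)}$ in the prescribed alternating-product form. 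Once this algebraic identification is in place, the probabilistic concentration arguments and the passage from empirical to deterministic matrices (i.e.\ $\hat{\mathbf{\Psi}}_{m}^{(i+1)}\to\tilde{\mathbf{\Psi}}_{m}^{(i+1)}$, $\hat{\mathbf{\Phi}}_{m}^{(i+1)}\to\tilde{\mathbf{\Phi}}_{m}^{(i+1)}$) follow from the pseudo-Lipschitz property and almost-sure convergence of empirical moments to $\bar{m}_{2k}$. Detailed execution would closely parallel the proof of the state evolution for FPAMP with symmetric matrices in \cite{zfan2020} and the rectangular extension in \cite{marco2022}, with modifications needed only to track the BS-antenna index $m$ and the side information structure $(\mathbf{E}_{m},\mathbf{F}_{m})=(\mathbf{h}_{m},\mathbf{n}_{m})$.
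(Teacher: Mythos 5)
Your proposal is correct in substance and takes essentially the same route as the paper: the paper's entire proof of this theorem is a one-line citation to Theorem 5.3 of \cite{zfan2020}, and your sketch is in effect a summary of that reference's conditioning-plus-rectangular-free-cumulant argument (with the rectangular/aspect-ratio bookkeeping from \cite{marco2022}), correctly identifying the base case, the role of the Onsager terms $\hat{\mathbf{M}}^{(i)}_{m,a}$, $\hat{\mathbf{M}}^{(i)}_{m,b}$ in cancelling the conditional mean, and the covariance recursions (\ref{tildeXi})--(\ref{tildetheta}). Like the paper, you ultimately defer the hard steps --- the non-crossing-partition identification of the conditional mean and covariance, and the concentration of the pseudo-Lipschitz averages --- to those references rather than executing them, so the proposal should be read as a faithful outline of the cited proof rather than an independent one.
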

\noindent In other words, as $N,L \rightarrow \infty$, the following results hold almost surely:
\begin{align}
\begin{aligned}
&\left(\mathbf{z}_{m}^{(2)},\cdots,\mathbf{z}_{m}^{(i+1)},\mathbf{v}_{m}^{(2)},\cdots,\mathbf{z}_{m}^{(i+1)},\mathbf{h}_{m}\right)\\
&\stackrel{W_2}{\longrightarrow} \left(Z_{m}^{(2)}, \ldots, Z_{m}^{(i+1)}, V_{m}^{(2)}, \ldots, V_{m}^{(i+1)}, H_{m}\right),
\end{aligned}
\end{align}
\begin{align}
\begin{aligned}
&\left(\mathbf{o}_{m}^{(1)},\cdots,\mathbf{o}_{m}^{(i)},\mathbf{u}_{m}^{(1)},\cdots,\mathbf{u}_{m}^{(i+1)},\mathbf{n}_{m}\right)\\
&\stackrel{W_2}{\longrightarrow} \left(O_{m}^{(1)}, \ldots, O_{m}^{(i)}, U_{m}^{(1)}, \ldots, U_{m}^{(i+1)}, N_{m}\right).
\end{aligned}
\end{align}
\begin{proof}
    The proof is referred to Theorem 5.3 in \cite{zfan2020}.
\end{proof}

Theorem \ref{theoremgeneral} shows that the joint empirical distribution of $(\mathbf{z}_{m}^{(1)},\ldots,\mathbf{z}_{m}^{(i)})$ converges to an $i$-dimensional complex Gaussian distribution $\mathcal{CN}(\mathbf{0},\tilde{\mathbf{\Omega}}_{m}^{(i)})$, and the joint empirical distribution of $(\mathbf{o}_{m}^{(1)},\ldots,\mathbf{o}_{m}^{(i)})$ converges to an $i$-dimensional complex Gaussian distribution $\mathcal{CN}(\mathbf{0},\tilde{\mathbf{\Sigma}}_{m}^{(i)})$. Based on this theorem, we next claim that the state evolution of the general recursion and the state evolution of the FPAMP-based algorithm are equivalent.
\begin{lemma}
\label{lemmaSE=}
    For $i = 1, 2, \cdots$, we have $\tilde{\mathbf{\Sigma}}_{m}^{(i)}=\bar{\mathbf{\Sigma}}_{m}^{(i)}$ and $\tilde{\mathbf{\Omega}}_{m}^{(i+1)}=\check{\mathbf{\Omega}}_{m}^{(i+1)}$, where $\bar{\mathbf{\Omega}}_{m}^{(i)}$ is the lower right $i \times i$ submatrix of $\check{\mathbf{\Omega}}_{m}^{(i+1)}$.
\end{lemma}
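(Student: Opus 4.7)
The plan is to induct on $i$, using a fixed dictionary between the random variables of the two state evolutions that is dictated by the definitions of $\tilde{f}^{(i+1)}$ and $\tilde{g}^{(i+1)}$. Comparing the general recursion's rule $\tilde{f}^{(i+1)}(Z_m^{(1)},\ldots,Z_m^{(i+1)},H_m)=f^{(i)}(Z_m^{(2)}+\bar{\mu}_m^{(1)}H_m,\ldots,Z_m^{(i+1)}+\bar{\mu}_m^{(i)}H_m)$ with the FPAMP rule $\tilde{H}_m^{(k)}=f^{(k)}(H_m^{(1)},\ldots,H_m^{(k)})$ where $H_m^{(k)}=\bar{\mu}_m^{(k)}H_m+W_m^{(k)}$ dictates the identifications $Z_m^{(1)}=0$, $Z_m^{(k+1)}\leftrightarrow W_m^{(k)}$, $V_m^{(1)}=H_m$, and $V_m^{(k+1)}\leftrightarrow\tilde{H}_m^{(k)}$. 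Analogously, $\tilde{g}^{(i+1)}(O_m^{(1)},\ldots,O_m^{(i)},N_m)=g^{(i)}(O_m^{(2)},\ldots,O_m^{(i)},O_m^{(1)}+N_m)$ matched against $S_m^{(k)}=g^{(k)}(R_m^{(1)},\ldots,R_m^{(k-1)},Y_m)$ with $Y_m=G_m+N_m$ gives $U_m^{(1)}=0$, $O_m^{(1)}\leftrightarrow G_m$, $O_m^{(k+1)}\leftrightarrow R_m^{(k)}$, and $U_m^{(k+1)}\leftrightarrow S_m^{(k)}$. Under this dictionary the two identities of the lemma become the assertion that corresponding covariance matrices agree, with the trivial first coordinates $Z_m^{(1)}=U_m^{(1)}=0$ producing exactly the zero padding that turns $\bar{\mathbf{\Omega}}_m^{(i)}$ into $\check{\mathbf{\Omega}}_m^{(i+1)}$.

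For the base case $i=1$, direct substitution yields $\tilde{\mathbf{\Sigma}}_m^{(1)}=\bar{\kappa}_2\mathbb{E}[(V_m^{(1)})^2]=\bar{\kappa}_2\mathbb{E}[H_m^2]=\bar{\mathbf{\Sigma}}_m^{(1)}$, while $U_m^{(1)}=0$ annihilates the first row and column of $\tilde{\mathbf{\Omega}}_m^{(2)}$ and its $(2,2)$ entry reduces to the FPAMP initialization of $\bar{\mathbf{\Omega}}_m^{(1)}$. For the inductive step, assuming the identities hold at level $i$, the core task is to verify that the four auxiliary matrices match: $\tilde{\mathbf{\Psi}}_m^{(i+1)}=\bar{\mathbf{\Psi}}_m^{(i+1)}$, $\tilde{\mathbf{\Phi}}_m^{(i+1)}=\bar{\mathbf{\Phi}}_m^{(i+1)}$, $\tilde{\mathbf{\Gamma}}_m^{(i+1)}=\bar{\mathbf{\Gamma}}_m^{(i+1)}$, and $\tilde{\mathbf{\Delta}}_m^{(i+1)}=\bar{\mathbf{\Delta}}_m^{(i+1)}$. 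The first two follow from the chain rule applied to the definitions of $\tilde{f}^{(i+1)}$ and $\tilde{g}^{(i+1)}$: under the dictionary, $\partial_{Z_m^{(j+1)}}V_m^{(k+1)}=\partial_j\tilde{H}_m^{(k)}$ for $j=1,\ldots,k$, and $\partial_{O_m^{(1)}}U_m^{(k+1)}=\partial_{G_m}S_m^{(k)}$, $\partial_{O_m^{(j+1)}}U_m^{(k+1)}=\partial_j S_m^{(k)}$ for $j\geq 1$, which transforms the entries of $\tilde{\mathbf{\Psi}}_m^{(i+1)}$ and $\tilde{\mathbf{\Phi}}_m^{(i+1)}$ into those of $\bar{\mathbf{\Psi}}_m^{(i+1)}$ and $\bar{\mathbf{\Phi}}_m^{(i+1)}$. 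The last two equalities follow immediately from the variable identifications, with the leading zero row and column of $\bar{\mathbf{\Delta}}_m^{(i+1)}$ realized by $U_m^{(1)}=0$. With all four auxiliary matrices matching, a term-by-term comparison of (\ref{Sigmaupdate})--(\ref{Sigmaupdate1}) against (\ref{tildesigma}) and (\ref{tildeXi}) yields $\tilde{\mathbf{\Sigma}}_m^{(i+1)}=\bar{\mathbf{\Sigma}}_m^{(i+1)}$, and the analogous comparison of (\ref{Omegaupdate1})--(\ref{Omegaupdate2}) against (\ref{tildeomega}) and (\ref{tildetheta}) yields $\tilde{\mathbf{\Omega}}_m^{(i+1)}=\check{\mathbf{\Omega}}_m^{(i+1)}$, closing the induction.

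The main obstacle is not analytical but combinatorial: the general recursion carries an extra leading coordinate, so one must carefully verify that the row and column conventions of $\tilde{\mathbf{\Psi}}$, $\tilde{\mathbf{\Phi}}$, $\tilde{\mathbf{\Gamma}}$, $\tilde{\mathbf{\Delta}}$ at dimension $(i+1)\times(i+1)$ reproduce their FPAMP counterparts exactly, and that the zero padding used to define $\check{\mathbf{\Omega}}_m^{(i+1)}$ from $\bar{\mathbf{\Omega}}_m^{(i)}$ is matched by the zero first row and column of $\tilde{\mathbf{\Omega}}_m^{(i+1)}$ induced by $Z_m^{(1)}=0$. A secondary subtlety is that the inductive hypothesis is what justifies using the FPAMP parameters $\{\bar{\mu}_m^{(k)}\}$ inside the definition of $\tilde{f}^{(i+1)}$: the dictionary $Z_m^{(k+1)}\leftrightarrow W_m^{(k)}$ is only consistent once one knows that the lower right $i\times i$ block of $\tilde{\mathbf{\Omega}}_m^{(i+1)}$ coincides with $\bar{\mathbf{\Omega}}_m^{(i)}$, which is precisely the inductive hypothesis at step $i$. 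Once this alignment is pinned down at each inductive level, the update formulas for the two state evolutions become formally identical and the proof reduces to verifying algebraic sameness.
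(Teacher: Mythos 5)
Your proof is correct and follows essentially the same route as the paper's: induction on $i$, matching the four auxiliary matrices $\tilde{\mathbf{\Psi}},\tilde{\mathbf{\Phi}},\tilde{\mathbf{\Gamma}},\tilde{\mathbf{\Delta}}$ to their barred counterparts via the identifications $Z_m^{(k+1)}\leftrightarrow W_m^{(k)}$, $V_m^{(k+1)}\leftrightarrow\tilde H_m^{(k)}$, $O_m^{(1)}\leftrightarrow G_m$, $O_m^{(k+1)}\leftrightarrow R_m^{(k)}$, $U_m^{(k+1)}\leftrightarrow S_m^{(k)}$, and then comparing the covariance update formulas term by term --- your explicit ``dictionary'' merely spells out what the paper leaves implicit. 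The only blemish is an index slip at the close of the inductive step: having assumed $\tilde{\mathbf{\Omega}}_m^{(i+1)}=\check{\mathbf{\Omega}}_m^{(i+1)}$, you should conclude $\tilde{\mathbf{\Omega}}_m^{(i+2)}=\check{\mathbf{\Omega}}_m^{(i+2)}$ (feeding the just-established $\tilde{\mathbf{\Sigma}}_m^{(i+1)}=\bar{\mathbf{\Sigma}}_m^{(i+1)}$ into $\tilde{\mathbf{\Phi}}_m^{(i+2)}$ and $\tilde{\mathbf{\Delta}}_m^{(i+2)}$), exactly as the paper does.
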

\begin{proof}
We use induction to prove this lemma. In particular, for $i=1$, $O_{m}^{(1)} \sim \mathcal{CN}(0,\tilde{\mathbf{\Sigma}}_{m}^{(1)})$ with $\tilde{\mathbf{\Sigma}}_{m}^{(1)}=\bar{\kappa}_2\mathbb{E}[H_{m}^{2}]=\bar{\mathbf{\Sigma}}_{m}^{(1)}$. Then, matrix $\tilde{\mathbf{\Omega}}_{m}^{(2)}$ can be computed as
\begin{align}
\tilde{\mathbf{\Omega}}_{m}^{(2)}=\left[\begin{array}{cc}
0 & 0 \\
0 & \frac{L\left(\bar{\kappa}_2\mathbb{E}[(g^{(1)})^2]+\bar{\kappa}_4\mathbb{E}[(V_{m}^{(1)})^{2}](\mathbb{E}[\partial_{O_{m}^{(1)}}g^{(1)}])^2 \right)}{(T+1)N}
\end{array}\right].
\end{align}
Since $V_m^{(1)}$ and $H_{m}$ ($O_{m}^{(1)}$ and $G_{m}$) follow the same distribution, $\tilde{\mathbf{\Omega}}_{m}^{(2)}=\check{\mathbf{\Omega}}_{m}^{(2)}$. 

Suppose $\tilde{\mathbf{\Sigma}}_{m}^{(i)}=\bar{\mathbf{\Sigma}}_{m}^{(i)}$ and $\tilde{\mathbf{\Omega}}_{m}^{(i+1)}=\check{\mathbf{\Omega}}_{m}^{(i+1)}$ for $i \geq 1$. Also, based on the state evolution of the general recursion, we have $(O_{m}^{(1)},\!\cdots\!,O_{m}^{(i)})\!\sim \mathcal{CN}(\mathbf{0},\tilde{\mathbf{\Sigma}}_{m}^{(i)})$ and $(Z_{m}^{(1)},\cdots,Z_{m}^{(i+1)}) \sim \mathcal{CN}(\mathbf{0},\tilde{\mathbf{\Omega}}_{m}^{(i+1)})$. By utilizing the induction hypothesis $\tilde{\mathbf{\Sigma}}_{m}^{(i)}=\bar{\mathbf{\Sigma}}_{m}^{(i)}$ and $\tilde{\mathbf{\Omega}}_{m}^{(i+1)}=\check{\mathbf{\Omega}}_{m}^{(i+1)}$ in the definitions of $\tilde{\mathbf{\Psi}}_{m}^{(i+1)}$, $\tilde{\mathbf{\Phi}}_{m}^{(i+1)}$, $\tilde{\mathbf{\Gamma}}_{m}^{(i+1)}$, and $\tilde{\mathbf{\Delta}}_{m}^{(i+1)}$ in (\ref{tildepsi}) $\sim$ (\ref{Deltatilde}), we obtain $\tilde{\mathbf{\Psi}}_{m}^{(i+1)}=\bar{\mathbf{\Psi}}_{m}^{(i+1)}$, $\tilde{\mathbf{\Phi}}_{m}^{(i+1)}=\bar{\mathbf{\Phi}}_{m}^{(i+1)}$, $\tilde{\mathbf{\Gamma}}_{m}^{(i+1)}=\bar{\mathbf{\Gamma}}_{m}^{(i+1)}$, and $\tilde{\mathbf{\Delta}}_{m}^{(i+1)}=\bar{\mathbf{\Delta}}_{m}^{(i+1)}$. As a result, by comparing (\ref{Sigmaupdate}) and (\ref{tildesigma}), we have $\tilde{\mathbf{\Sigma}}_{m}^{(i+1)}=\bar{\mathbf{\Sigma}}_{m}^{(i+1)}$. This implies that $\tilde{\mathbf{\Phi}}_{m}^{(i+2)}=\bar{\mathbf{\Phi}}_{m}^{(i+2)}$, $\tilde{\mathbf{\Delta}}_{m}^{(i+2)}=\bar{\mathbf{\Delta}}_{m}^{(i+2)}$. Hence, according to (\ref{Omegaupdate2}) and (\ref{tildetheta}), we have $\tilde{\mathbf{\Theta}}_{m,j}^{(i+2)}=\mathbf{\Theta}_{m,j}^{(i+2)}$, and thus $\tilde{\mathbf{\Omega}}_{m}^{(i+2)}=\check{\mathbf{\Omega}}_{m}^{(i+2)}$, which completes the proof.
\end{proof}

Therefore, $(Z_m^{(2)},\cdots,Z_m^{(i+1)})$ and $(H_{m}^{(1)}-\bar{\mu}_{m}^{(1)}H_{m},\cdots,$ $H_{m}^{(i)}-\bar{\mu}_{m}^{(i)}H_{m})$ have the same distribution, which also applies to $(O_m^{(1)},\cdots,O_m^{(i)})$ and $(G_{m},R_{m}^{(1)},\cdots,R_{m}^{(i-1)})$. In order to prove Theorem \ref{theoremconverge}, it remains to show in the following lemma that the FPAMP iterations are close to the general recursion.
\begin{lemma}
\label{closelemma}
    Let $\psi$: $\mathbb{C}^{2i+1}\rightarrow \mathbb{C}$ and $\phi$: $\mathbb{C}^{2i+2}\rightarrow \mathbb{C}$ be any pseudo-Lipschitz functions of order 2. For $i = 1, 2, \cdots$, it is almost sure that
    \begin{align}
    \begin{aligned}
    \label{close1}
        &\lim _{(T+1)N \rightarrow \infty} \frac{1}{(T+1)N}\\ &\Bigg|\sum_{t=1}^{(T+1)N} \psi\left(h_{t,m}^{(1)}, \ldots, h_{t,m}^{(i)}, \tilde{h}_{t,m}^{(1)}, \ldots, \tilde{h}_{t,m}^{(i)}, h_{t,m}\right)\\
        &-\sum_{t=1}^{(T+1)N}\psi\Big(z_{t,m}^{(2)}+\bar{\mu}_{m}^{(1)}h_{t,m}, \ldots, z_{t,m}^{(i+1)}+\bar{\mu}_{m}^{(i)}h_{t,m},\\
        &\quad \quad \quad \quad \quad v_{t,m}^{(2)}, \ldots, v_{t,m}^{(i+1)}, h_{t,m}\Big)\Bigg|=0,
    \end{aligned}
    \end{align}
    \begin{align}
    \begin{aligned}
    \label{close2}
        &\lim_{L \rightarrow \infty} \frac{1}{L} \Bigg|\sum_{l=1}^{L} \phi\left(r_{l,m}^{(1)}, \ldots, r_{l,m}^{(i)}, s_{l,m}^{(1)}, \ldots, s_{l,m}^{(i+1)}, y_{l,m}\right)\\
        &-\sum_{l=1}^{L}\phi \left(o_{l,m}^{(2)}, \ldots, o_{l,m}^{(i+1)}, u_{l,m}^{(2)}, \ldots, u_{l,m}^{(i+2)}, o_{l,m}^{(1)}+n_{l,m}\right)\Bigg|=0.
    \end{aligned}
    \end{align}
\end{lemma}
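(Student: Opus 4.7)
The plan is to prove Lemma \ref{closelemma} by strong induction on $i$, exploiting the fact that the FPAMP iterates are, up to an $\bar{\mu}_m^{(i)}\mathbf{h}_m$ shift, a specialization of the general recursion once its initializations and the definitions of $\tilde{f}$ and $\tilde{g}$ are unwound. Concretely, the correspondence I intend to propagate is $\mathbf{s}_m^{(i)} \leftrightarrow \mathbf{u}_m^{(i+1)}$, $\tilde{\mathbf{h}}_m^{(i)} \leftrightarrow \mathbf{v}_m^{(i+1)}$, $\mathbf{r}_m^{(i)} \leftrightarrow \mathbf{o}_m^{(i+1)}$, and $\mathbf{h}_m^{(i)} \leftrightarrow \mathbf{z}_m^{(i+1)} + \bar{\mu}_m^{(i)}\mathbf{h}_m$, together with $\mathbf{y}_m = \mathbf{o}_m^{(1)} + \mathbf{n}_m$; the claims (\ref{close1})--(\ref{close2}) then follow via a standard pseudo-Lipschitz bound once the FPAMP and shifted general-recursion iterates are shown to have vanishing normalized $\ell_2$-distance.

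For the base case $i=1$, I would directly unwind the initializations: $\mathbf{v}_m^{(1)}=\tilde{f}^{(1)}(\mathbf{z}_m^{(1)},\mathbf{h}_m)=\mathbf{h}_m$, so $\mathbf{o}_m^{(1)}=\mathbf{P}\mathbf{h}_m$ and $\mathbf{u}_m^{(2)}=g^{(1)}(\mathbf{o}_m^{(1)}+\mathbf{n}_m)=\mathbf{y}_m=\mathbf{s}_m^{(1)}$, which immediately gives $\mathbf{h}_m^{(1)}-\mathbf{z}_m^{(2)}=b_{m,1}^{(2)}\mathbf{h}_m$; this matches the required $\bar{\mu}_m^{(1)}\mathbf{h}_m$ shift provided $b_{m,1}^{(2)}\to\bar{\mu}_m^{(1)}$, which in turn collapses to the identity $\kappa_2\to\bar{\kappa}_2$ after substituting $\tilde{g}^{(2)}(\mathbf{o},\mathbf{n})=\mathbf{o}+\mathbf{n}$ into (\ref{Mb}). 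For the inductive step, pseudo-Lipschitz continuity of $f^{(i+1)}$ and $g^{(i+2)}$, combined with the identities relating $(\tilde{f}^{(i+2)},\tilde{g}^{(i+2)})$ back to $(f^{(i+1)},g^{(i+1)})$ by the prescribed argument shifts, transport the asymptotic equality from iteration $i$ onward to $\tilde{\mathbf{h}}_m^{(i+1)}\leftrightarrow\mathbf{v}_m^{(i+2)}$ and $\mathbf{s}_m^{(i+2)}\leftrightarrow\mathbf{u}_m^{(i+3)}$ once the Onsager corrections are matched.

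The main obstacle is precisely this alignment of Onsager coefficients across the two recursions. Specifically, I must show that $\alpha_{m,j}^{(i+1)}$ and $\gamma_{m,j}^{(i+1)}$ from (\ref{alphaupdate})--(\ref{gammaupdate}) coincide asymptotically with $a_{m,j}^{(i+2)}$ and $b_{m,j}^{(i+2)}$ from (\ref{Ma})--(\ref{Mb}), with the diagonal term $b_{m,i+2}^{(i+2)}\mathbf{h}_m$ absorbed exactly into the updated shift $\bar{\mu}_m^{(i+1)}\mathbf{h}_m$. Applying the chain rule to the $(\tilde{f},\tilde{g})$-to-$(f,g)$ relations shows that the empirical derivative matrices $\hat{\mathbf{\Psi}}_m^{(i+2)}$, $\hat{\mathbf{\Phi}}_m^{(i+2)}$ of the general recursion reduce, after stripping the block rows and columns zeroed out by their structure, to $\mathbf{\Psi}_m^{(i+1)}$, $\mathbf{\Phi}_m^{(i+1)}$ of FPAMP. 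Theorem \ref{theoremgeneral} then yields almost-sure convergence of both sets of empirical averages to a common deterministic limit, and Lemma \ref{lemmaSE=} guarantees that these limits drive the same state-evolution covariances. A term-by-term comparison of (\ref{Malpha})--(\ref{Mgamma}) with (\ref{Ma})--(\ref{Mb}), together with $\kappa_{2k}\to\bar{\kappa}_{2k}$, closes the induction, and a final pseudo-Lipschitz estimate of order 2 converts the $\ell_2$-closeness of the iterates into closeness of the empirical averages on the left-hand sides of (\ref{close1})--(\ref{close2}).
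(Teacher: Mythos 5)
Your proposal follows essentially the same route as the paper's proof: reduce \eqref{close1}--\eqref{close2} to showing that the normalized $\ell_2$-distances between the FPAMP iterates and the correspondingly shifted general-recursion iterates vanish (via a pseudo-Lipschitz/Cauchy--Schwarz bound), and establish that vanishing by induction that aligns the two recursions' Onsager coefficients and invokes Theorem~\ref{theoremgeneral} and Lemma~\ref{lemmaSE=}; your index correspondence and base-case computation match the paper's exactly. The only difference is presentational: the paper states the Cauchy--Schwarz bound explicitly first and defers the inductive closeness argument to Appendix D.4 of \cite{marco2022}, whereas you sketch that induction directly.
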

\begin{proof}
By applying Cauchy-Schwarz inequality, we have the following results since $\psi\left(\cdot\right)$ is a pseudo-Lipschitz function: 
\begin{align}
\begin{aligned}
\label{ineq1}
    &\frac{1}{(T+1)N} \Bigg|\sum_{t=1}^{(T+1)N} \psi\left(h_{t,m}^{(1)}, \ldots, h_{t,m}^{(i)}, \tilde{h}_{t,m}^{(1)}, \ldots, \tilde{h}_{t,m}^{(i)}, h_{t,m}\right)\\
        &-\sum_{t=1}^{(T+1)N}\psi\Big(z_{t,m}^{(2)}+\bar{\mu}_{m}^{(1)}h_{t,m}, \ldots, z_{t,m}^{(i+1)}+\bar{\mu}_{m}^{(i)}h_{t,m},\\
        &\quad \quad \quad \quad \quad v_{t,m}^{(2)}, \ldots, v_{t,m}^{(i+1)}, h_{t,m}\Big)\Bigg|\\
        &\leq \frac{C_0}{(T+1)N}\sum_{t=1}^{(T+1)N}\Bigg(1+|h_{t,m}|+\sum_{j=1}^{i}\Big(|h_{t,m}^{(j)}|+|\tilde{h}_{t,m}^{(j)}|+|z_{t,m}^{(j+1)}\\
        &\quad \quad \quad \quad \quad \quad \quad \quad \quad+\bar{\mu}_{m}^{(j)}h_{t,m}|+|v_{t,m}^{(j+1)}|\Big)\Bigg)\\
        &\quad \times\left(\sum_{j=1}^{i}\left(|h_{t,m}^{(j)}-z_{t,m}^{(j+1)}-\bar{\mu}_{m}^{(j)}h_{t,m}|^2+|\tilde{h}_{t,m}^{(j)}-v_{t,m}^{(j+1)}|^2\right)\right)^{\frac{1}{2}}\\
        &\leq C_0(4i+2)\Bigg[1+\frac{||\mathbf{h}_{m}||^2}{(T+1)N}\\
        &\ +\sum_{j=1}^{i}\frac{||\mathbf{h}_{m}^{(j)}||^2+||\tilde{\mathbf{h}}_{m}^{(j)}||^2+||\mathbf{z}_{m}^{(j+1)}+\bar{\mu}_{m}^{(j)}\mathbf{h}_{m}||^2+||\mathbf{v}_{m}^{(j+1)}||^2}{(T+1)N}\Bigg]^{\frac{1}{2}}\\
        &\quad \times \left(\sum_{j=1}^{i}\frac{||\mathbf{h}_{m}^{(j)}-\mathbf{z}_{m}^{(j+1)}-\bar{\mu}_{m}^{(j)}\mathbf{h}_{m}||^2+||\tilde{\mathbf{h}}_{m}^{(j)}-\mathbf{v}_{m}^{(j+1)}||^2}{(T+1)N}\right)^{\frac{1}{2}},
\end{aligned}
\end{align}
where $C_0$ denotes a generic positive constant. Similarly, we have
\begin{align}
\begin{aligned}
\label{ineq2}
&\frac{1}{L} \Bigg|\sum_{l=1}^{L}\phi\left(r_{l,m}^{(1)}, \ldots, r_{l,m}^{(i)}, s_{l,m}^{(1)}, \ldots, s_{l,m}^{(i+1)}, y_{l,m}\right)\\
&-\sum_{l=1}^{L}\phi \left(o_{l,m}^{(2)}, \ldots, o_{l,m}^{(i+1)}, u_{l,m}^{(2)}, \ldots, u_{l,m}^{(i+2)}, o_{l,m}^{(1)}+n_{l,m}\right)\Bigg|\\
&\leq C_{0}(4i+5)\Bigg[1+\frac{||\mathbf{y}_{m}||^2+||\mathbf{o}_{m}^{(1)}+\mathbf{n}_{m}||^2}{L}\\
&+\sum_{j=1}^{i}\frac{||\mathbf{r}_{m}^{(j)}||^2+||\mathbf{o}_{m}^{(j+1)}||^2}{L}+\sum_{j=1}^{i+1}\frac{||\mathbf{s}_{m}^{(j)}||^2+||\mathbf{u}_{m}^{(j+1)}||^2}{L}\Bigg]^{\frac{1}{2}}\\
&\times \Bigg(\frac{||\mathbf{s}_{m}^{(i+1)}-\mathbf{u}_{m}^{(i+2)}||^2+||\mathbf{y}_{m}-\mathbf{o}_{m}^{(1)}-\mathbf{n}_{m}||^2}{L}\\
&+\sum_{j=1}^{i}\frac{||\mathbf{r}_{m}^{(j)}-\mathbf{o}_{m}^{(j+1)}||^2+||\mathbf{s}_{m}^{(j)}-\mathbf{u}_{m}^{(j+1)}||^2}{L}\Bigg)^{\frac{1}{2}}.
\end{aligned}
\end{align}
To prove this lemma, we just need to show that as $N,L\rightarrow \infty$, the terms inside the brackets of (\ref{ineq1}) and (\ref{ineq2}) converge to finite and deterministic limits while the terms in the last line of (\ref{ineq1}) and (\ref{ineq2}) converge to zero. This can be accomplished via mathematical induction following Appendix D.4 in \cite{marco2022}.

\end{proof}

Let
\begin{align}
\begin{aligned}
    &\tilde{\psi}\Big(z_{t,m}^{(2)}, \ldots, z_{t,m}^{(i+1)}, v_{t,m}^{(2)}, \ldots, v_{t,m}^{(i+1)}, h_{t,m}\Big)\\
    &=\psi\Big(z_{t,m}^{(2)}+\bar{\mu}_{m}^{(1)}h_{t,m}, \ldots, z_{t,m}^{(i+1)}+\bar{\mu}_{m}^{(i)}h_{t,m},\\
    &\quad \quad v_{t,m}^{(2)}, \ldots, v_{t,m}^{(i+1)}, h_{t,m}\Big).
\end{aligned}
\end{align}
It is almost sure that
\begin{align}
    \begin{aligned}
    \label{transform}
        &\lim _{(T+1)N \rightarrow \infty} \frac{1}{(T+1)N}\\ &\sum_{t=1}^{(T+1)N} \psi\Big(z_{t,m}^{(2)}+\bar{\mu}_{m}^{(1)}h_{t,m}, \ldots, z_{t,m}^{(i+1)}+\bar{\mu}_{m}^{(i)}h_{t,m}, \\ 
        &\quad \quad \quad \quad v_{t,m}^{(2)}, \ldots, v_{t,m}^{(i+1)}, h_{t,m}\Big)\\
        &\overset{(\text{a})}{=}\mathbb{E}\Big[\psi\Big(Z_{m}^{(2)}+\bar{\mu}_{m}^{(1)}H_{m}, \ldots, Z_{m}^{(i+1)}+\bar{\mu}_{m}^{(i)}H_{m},\\
        &\quad \quad \quad \quad V_{m}^{(2)}, \ldots, V_{m}^{(i+1)}, H_{m}\Big)\Big]\\
        &\overset{(\text{b})}{=}\mathbb{E}\left[\psi\left(H_{m}^{(1)}, \ldots, H_{m}^{(i)}, \tilde{H}_{m}^{(1)}, \ldots, \tilde{H}_{m}^{(i)}, H_{m}\right)\right],
    \end{aligned}
\end{align}
\noindent where (a) can be obtained via (\ref{generaltheorem1}) in Theorem \ref{theoremgeneral}, and (b) is based on Lemma \ref{lemmaSE=}, (\ref{SE4}), and (\ref{generalSE2}). Therefore, with (\ref{transform}) and (\ref{close1}) of Lemma \ref{closelemma}, we can obtain (\ref{theoremconverge1}) of Theorem \ref{theoremconverge}. Meanwhile, (\ref{theoremconverge2}) can be derived similarly, which completes the proof of Theorem \ref{theoremconverge}.

\ifCLASSOPTIONcaptionsoff
  \newpage
\fi

\end{document}